\newtheorem{proposition}{Proposition}
\DeclareMathOperator{\rmsd}{RMSD}
\title{Torsional Diffusion for\\Molecular Conformer Generation}
\newcommand*\samethanks[1][\value{footnote}]{\footnotemark[#1]}
\author{Bowen Jing,\thanks{Equal contribution. Correspondence to \texttt{\{bjing, gcorso\}@mit.edu}.}\;\;\textsuperscript{1}\; Gabriele Corso,\samethanks\;\;\textsuperscript{1}\;  Jeffrey Chang,\textsuperscript{2}\; Regina Barzilay,\textsuperscript{1}\; Tommi Jaakkola\textsuperscript{1} \\
       \textsuperscript{1}CSAIL, Massachusetts Institute of Technology \;\;\; \textsuperscript{2}Dept. of Physics, Harvard University\\
}
\def\eqref#1{equation~\ref{#1}}
\def\1{\bm{1}}
\DeclareMathAlphabet{\mathsfit}{\encodingdefault}{\sfdefault}{m}{sl}
\SetMathAlphabet{\mathsfit}{bold}{\encodingdefault}{\sfdefault}{bx}{n}
\DeclareMathOperator{\sign}{sign}
\newtheorem{prop}{Proposition}
\newcommand{\comment}[1]{}
\begin{document}

\maketitle

\begin{abstract}
    Molecular conformer generation is a fundamental task in computational chemistry. Several machine learning approaches have been developed, but none have outperformed state-of-the-art cheminformatics methods. We propose \emph{torsional diffusion}, a novel diffusion framework that operates on the space of torsion angles via a diffusion process on the hypertorus and an extrinsic-to-intrinsic score model. On a standard benchmark of drug-like molecules, torsional diffusion generates superior conformer ensembles compared to machine learning and cheminformatics methods in terms of both RMSD and chemical properties, and is orders of magnitude faster than previous diffusion-based models. Moreover, our model provides exact likelihoods, which we employ to build the first generalizable Boltzmann generator. Code is available at \url{https://github.com/gcorso/torsional-diffusion}.
\end{abstract}

\section{Introduction} \label{sec:introduction}

Many properties of a molecule are determined by the set of low-energy structures, called \emph{conformers}, that it adopts in 3D space. Conformer generation is therefore a fundamental problem in computational chemistry \citep{hawkins2017conformation} and an area of increasing attention in machine learning. Traditional approaches to conformer generation consist of metadynamics-based methods, which are accurate but slow \citep{pracht2020automated}; and cheminformatics-based methods, which are fast but less accurate \citep{hawkins2010conformer, riniker2015better}. Thus, there is growing interest in developing deep generative models to combine high accuracy with fast sampling. 

Diffusion or score-based generative models \citep{ho2020denoising, song2021score}---a promising class of generative models---have been applied to conformer generation under several different formulations. These have so far considered diffusion processes in \emph{Euclidean} space, in which Gaussian noise is injected independently into every data coordinate---either pairwise distances in a distance matrix  \citep{shi2021learning, luo2021predicting} or atomic coordinates in 3D \citep{xu2021geodiff}. However, these models require a large number of denoising steps and have so far failed to outperform the best cheminformatics methods.

We instead propose \emph{torsional diffusion}, in which the diffusion process over conformers acts only on the torsion angles and leaves the other degrees of freedom fixed. This is possible and effective because the flexibility of a molecule, and thus the difficulty of conformer generation, lies largely in torsional degrees of freedom \citep{axelrod2020geom}; in particular, bond lengths and angles can already be determined quickly and accurately by standard cheminformatics methods. Leveraging this insight significantly reduces the dimensionality of the sample space; drug-like molecules\footnote{As measured from the standard dataset GEOM-DRUGS \citep{axelrod2020geom}} have, on average, $n=44$ atoms, corresponding to a $3n$-dimensional Euclidean space, but only $m=7.9$ torsion angles of rotatable bonds. \looseness=-1

Torsion angle coordinates define not a Euclidean space, but rather an $m$-dimensional torus $\mathbb{T}^m$ (Figure \ref{fig:overview}, \emph{left}). However, the dimensionality and distribution over the torus vary between molecules and even between different ways of defining the torsional space for the same molecule. To resolve these difficulties, we develop an \emph{extrinsic-to-intrinsic} score model (Figure~\ref{fig:overview}, \emph{right}) that takes as {input} a 3D point cloud representation of the conformer in {Euclidean} space (extrinsic coordinates), and predicts as {output} a score on a {torsional} space \emph{specific to that molecule} (intrinsic coordinates). To do so, we consider a \emph{torsional score} for a bond as a geometric property of a 3D point cloud, and use $SE(3)$-equivariant networks to predict them directly for each bond.

Unlike prior work, our model provides exact likelihoods of generated conformers, enabling training with the ground-truth \emph{energy} function rather than samples alone. This connects with the literature on \emph{Boltzmann generators}---generative models which aim to sample the Boltzmann distribution of physical systems without expensive molecular dynamics or MCMC simulations \citep{noe2019boltzmann, kohler2021smooth}. Thus, as a variation on the torsional diffusion framework, we develop \emph{torsional Boltzmann generators} that can approximately sample the conditional Boltzmann distribution for unseen molecules. This starkly contrasts with existing Boltzmann generators, which are specific for the chemical system on which they are trained.

Our main contributions are:
\begin{itemize}
    \item We formulate conformer generation in terms of diffusion modeling on the hypertorus---the first demonstration of non-Euclidean diffusion on complex datasets---and develop an extrinsic-to-intrinsic score model that satisfies the required symmetries: $SE(3)$ invariance, torsion definition invariance, and parity equivariance.
    \item We obtain state-of-the-art results on the GEOM-DRUGS dataset \citep{axelrod2020geom} and are the first method to consistently outperform the established commercial software OMEGA \citep{hawkins2017conformation}. We do so using two orders of magnitude \emph{fewer} denoising steps than GeoDiff \citep{xu2021geodiff}, the best Euclidean diffusion approach.
    \item We propose torsional Boltzmann generators---the first Boltzmann generator based on diffusion models rather than normalizing flows and the first to be useful for a class of molecules rather than a specific system.
\end{itemize}

\begin{figure}[t]
    \centering
    \includegraphics[width=\textwidth]{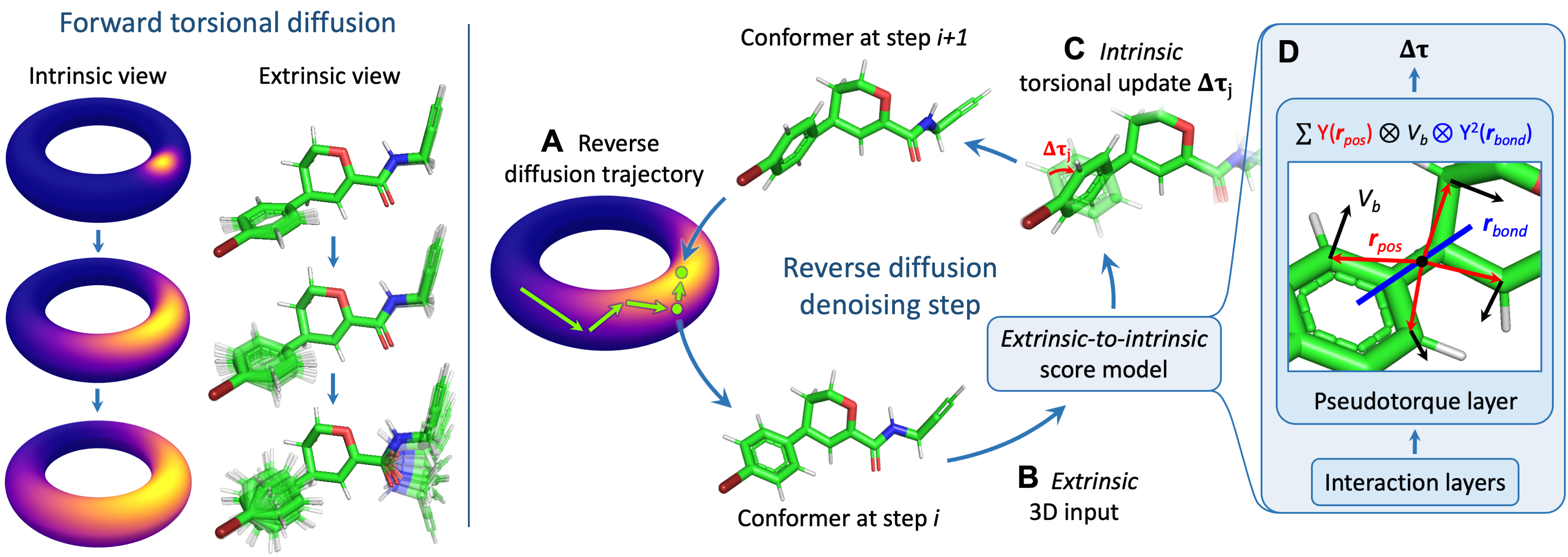}
    \caption{\textbf{Overview of torsional diffusion.} \emph{Left}: Extrinsic and intrinsic views of torsional diffusion (only 2 dimensions/bonds shown). \emph{Right}: In a step of reverse diffusion (\textbf{A}), the current conformer is provided as a 3D structure (\textbf{B}) to the score model, which predicts intrinsic torsional updates (\textbf{C}). The final layer of the score model is constructed to resemble a torque computation around each bond (\textbf{D}). $Y$ refers to the spherical harmonics and $V_b$ the learned atomic embeddings.}
    \label{fig:overview}
\end{figure}

\section{Background} \label{sec:background}

\paragraph{Diffusion generative models}
Consider the data distribution as the starting distribution $p_0(\mathbf{x})$ of a \emph{forward diffusion process} described by an Ito stochastic differential equation (SDE):
\begin{equation} \label{eq:forward}
    d\mathbf{x} = \mathbf{f}(\mathbf{x}, t)\; dt + g(t) \; d\mathbf{w}, \; \; t \in (0, T)
\end{equation}
where $\mathbf{w}$ is the Wiener process and $\mathbf{f}(\mathbf{x}, t), g(t)$ are chosen functions. With sufficiently large $T$, the distribution $p_T(\mathbf{x})$---the \emph{prior}---approaches a simple Gaussian. Sampling from the prior and solving the \emph{reverse diffusion}
\begin{equation} \label{eq:reverse}
    d\mathbf{x} = \left[\mathbf{f}(\mathbf{x}_t, t) - g^2(t)\nabla_\mathbf{x}\log p_t(\mathbf{x})\right]\; dt + g(t) \; d\mathbf{\bar w}
\end{equation} 
yields samples from the data distribution $p_0(\mathbf{x})$ \citep{anderson1982reverse,song2021score}. Diffusion, or score-based, generative models \citep{ho2020denoising,song2021score} learn the score $\nabla_\mathbf{x}\log p_t(\mathbf{x})$ of the diffused data with a neural network and generate data by approximately solving the reverse diffusion. The score of the diffused data also defines a \emph{probability flow ODE}---a continuous normalizing flow that deterministically transforms the prior into the data distribution \citep{song2021score}. We leverage the insight that, in many cases, this flow makes it possible to use diffusion models in place of normalizing flows and highlight one such case with the torsional Boltzmann generator.

Diffusion generative models have traditionally been used to model data on Euclidean spaces (such as images); however, \citet{de2022riemannian} recently showed that the theoretical framework holds with relatively few modifications for data distributions on compact Riemannian manifolds. The hypertorus $\mathbb{T}^m$, which we use to define torsional diffusion, is a specific case of such a manifold.

Several methods \citep{salimans2021progressive,vahdat2021score,nichol2021improved} have been proposed to improve and accelerate diffusion models in the domain of image generation. Among these, the most relevant to this work is \emph{subspace diffusion} \citep{jing2022subspace}, in which the diffusion is progressively restricted to linear subspaces. Torsional diffusion can be viewed in a similar spirit, as it effectively restricts Euclidean diffusion to a nonlinear \emph{manifold} given by fixing the non-torsional degrees of freedom.

\paragraph{Molecular conformer generation} The \emph{conformers} of a molecule are the set of its energetically favorable 3D structures, corresponding to local minima of the potential energy surface.\footnote{Conformers are typically considered up to an energy cutoff above the global minimum.} The gold standards for conformer generation are metadynamics-based methods such as CREST \citep{pracht2020automated}, which explore the potential energy surface while filling in local minima \citep{hawkins2017conformation}. However, these require an average of 90 core-hours per drug-like molecule \citep{axelrod2020geom} and are not considered suitable for high-throughput applications. Cheminformatics methods instead leverage approximations from chemical heuristics, rules, and databases for significantly faster generation \citep{lagorce2009dg,cole2018knowledge,miteva2010frog2,bolton2011pubchem3d,li2007caesar}; while these can readily model highly constrained degrees of freedom, they fail to capture the full energy landscape. The most well-regarded of such methods include the commercial software OMEGA \citep{hawkins2010conformer} and the open-source RDKit ETKDG \citep{landrum2013rdkit,riniker2015better}.

A number of machine learning methods for conformer generation has been developed \citep{xu2020learning,xu2021end,shi2021learning,luo2021predicting}, the most recent and advanced of which are GeoMol \citep{ganea2021geomol} and GeoDiff \citep{xu2021geodiff}. GeoDiff is a Euclidean diffusion model that treats conformers as point clouds $\mathbf{x} \in \mathbb{R}^{3n}$ and learns an $SE(3)$ equivariant score. On the other hand, GeoMol employs a graph neural network that, in a single forward pass, predicts neighboring atomic coordinates and torsion angles from a stochastic seed. 

\paragraph{Boltzmann generators} An important problem in physics and chemistry is that of generating independent samples from a Boltzmann distribution $p(\mathbf{x}) \propto e^{-E(\mathbf{x})/kT}$ with known but unnormalized density.\footnote{This is related to but distinct from conformer generation, as conformers are the local minima of the Boltzmann distribution rather than independent samples.} Generative models with exact likelihoods, such as normalizing flows, can be trained to match such densities \citep{noe2019boltzmann} and thus provide independent samples from an approximation of the target distribution. Such \emph{Boltzmann generators} have shown high fidelity on small organic molecules \citep{kohler2021smooth} and utility on systems as large as proteins \citep{noe2019boltzmann}. However, a {separate model} has to be trained for every molecule, as the normalizing flows operate on intrinsic coordinates whose definitions are specific to that molecule. This limits the utility of existing Boltzmann generators for molecular screening applications.

\section{Torsional Diffusion} \label{sec:torsional_diffusion}

Consider a molecule as a graph $G = (\mathcal{V}, \mathcal{E})$ with atoms $v \in \mathcal{V}$ and bonds $e \in \mathcal{E}$,\footnote{Chirality and other forms of stereoisomerism are discussed in Appendix~\ref{app:discuss:isomerism}.} and denote the space of its possible conformers $\mathcal{C}_G$. A conformer $C \in \mathcal{C}_G$ can be specified in terms of its \emph{intrinsic} (or internal) coordinates: local structures $L$ consisting of bond lengths, bond angles, and cycle conformations; and torsion angles $\boldsymbol{\tau}$ consisting of dihedral angles around freely rotatable bonds (precise definitions in Appendix~\ref{app:def}). We consider a bond \emph{freely rotatable} if severing the bond creates two connected components of $G$, each of which has at least two atoms.\footnote{Notably, this counts double bonds as rotatable. See Appendix~\ref{app:discuss:isomerism} for further discussion.}  Thus, torsion angles in cycles (or rings), which cannot be rotated independently, are considered part of the local structure $L$.

\emph{Conformer generation} consists of learning probability distributions $p_G(L, \boldsymbol{\tau})$. However, the set of possible stable local structures $L$ for a particular molecule is very constrained and can be accurately predicted by fast cheminformatics methods, such as RDKit ETKDG \citep{riniker2015better} (see Appendix~\ref{app:discuss:rdkit} for verification). Thus, we use RDKit to provide approximate samples from $p_G(L)$, and develop a diffusion-based generative model to learn distributions $p_G(\boldsymbol{\tau} \mid L)$ over torsion angles---conditioned on a given graph and local structure.

Our method is illustrated in Figure~\ref{fig:overview} and detailed as follows. Section \ref{sec:toroidal_diff} formulates diffusion modeling on the torus defined by torsion angles. Section \ref{sec:score_framework} describes the torsional score framework, Section \ref{sec:parity} the required symmetries, and Section \ref{sec:score_model} our score model architecture. Section \ref{sec:likelihood} discusses likelihoods, and Section \ref{sec:energy} how likelihoods can be used for energy-based training.

\subsection{Diffusion modeling on $\mathbb{T}^m$} \label{sec:toroidal_diff}

Since each torsion angle coordinate lies in $[0, 2\pi)$, the $m$ torsion angles of a conformer define a hypertorus $\mathbb{T}^m$. To learn a generative model over this space, we apply the continuous score-based framework of \cite{song2021score}, which holds with minor modifications for data distributions on compact Riemannian manifolds (such as $\mathbb{T}^m$) \citep{de2022riemannian}. Specifically, for Riemannian manifold $M$ let $\mathbf{x} \in M$, let $\mathbf{w}$ be the Brownian motion on the manifold, and let the drift $\mathbf{f}(\mathbf{x},t)$, score $\nabla_\mathbf{x} \log p_t(\mathbf{x})$, and score model output $\mathbf{s}(\mathbf{x}, t)$ be elements of the tangent space $T_\mathbf{x}M$. Then \eqref{eq:reverse} remains valid---that is, discretizing and solving the reverse SDE on the manifold as a \emph{geodesic random walk} starting with samples from $p_T(\mathbf{x})$ approximately recovers the original data distribution $p_0(\mathbf{x})$ \citep{de2022riemannian}.

For the forward diffusion we use rescaled Brownian motion given by $\mathbf{f}(\mathbf{x}, t) = 0, g(t) = \sqrt{\frac{d}{dt} \sigma^2(t)}$ where $\sigma(t)$ is the noise scale. Specifically, we use an exponential diffusion $\sigma(t) = \sigma^{1-t}_\text{min}\sigma^t_\text{max}$ as in \citet{song2019generative}, with $\sigma_\text{min}=0.01\pi$, $\sigma_\text{max} = \pi, t \in (0, 1)$. Due to the compactness of the manifold, however, the prior $p_T(\mathbf{x})$ is no longer a Gaussian, but a \emph{uniform} distribution over $M$. 

Training the score model with denoising score matching requires a procedure to sample from the perturbation kernel $p_{t\mid 0}(\mathbf{x}' \mid \mathbf{x})$ of the forward diffusion and compute its score. We view the torus $\mathbb{T}^m \cong [0, 2\pi)^m$ as the quotient space $\mathbb{R}^m/2\pi\mathbb{Z}^m$ with equivalence relations $(\tau_1, \ldots \tau_m) \sim (\tau_1+2\pi, \ldots, \tau_m) \ldots \sim (\tau_1, \ldots \tau_m+2\pi)$. Hence, the perturbation kernel for rescaled Brownian motion on $\mathbb{T}^m$ is the \emph{wrapped normal distribution} on $\mathbb{R}^m$; that is, for any $\boldsymbol{\tau}, \boldsymbol{\tau}' \in [0, 2\pi)^m$, we have
\begin{equation} \label{eq:torus_score}
    p_{t\mid 0}(\boldsymbol{\tau}' \mid \boldsymbol{\tau}) \propto \sum_{\mathbf{d} \in \mathbb{Z}^m} \exp\left(-\frac{||\boldsymbol{\tau} - \boldsymbol{\tau}' + 2\pi\mathbf{d}||^2}{2\sigma^2(t)}\right)
\end{equation}
where $\sigma(t)$ is the noise scale of the perturbation kernel $p_{t\mid 0}$. We thus sample from the perturbation kernel by sampling from the corresponding unwrapped isotropic normal and taking elementwise $\mod 2\pi$. The scores of the kernel are pre-computed using a numerical approximation. During training, we sample times $t$ at uniform and minimize the denoising score matching loss
\begin{equation} \label{eq:dsm}
    J_\text{DSM}(\theta) = \mathbb{E}_t\left[\lambda(t)\mathbb{E}_{\boldsymbol{\tau}_0\sim p_0,\boldsymbol{\tau}_t\sim p_{t\mid 0}(\cdot \mid \boldsymbol{\tau}_0)}\left[||\mathbf{s}(\boldsymbol{\tau}_t, t) - \nabla_{\boldsymbol{\tau}_t} \log p_{t\mid 0}(\boldsymbol{\tau}_t \mid \boldsymbol{\tau}_0)||^2\right]\right]
\end{equation}
where the weight factors $
    \lambda(t) = 1/\mathbb{E}_{\boldsymbol{\tau} \sim p_{t\mid 0}(\cdot \mid 0)}\left[||\nabla_{\boldsymbol{\tau}} \log p_{t\mid 0}(\boldsymbol{\tau} \mid \mathbf{0})||^2\right]$
are also precomputed. As the tangent space $T_{\boldsymbol{\tau}}\mathbb{T}^m$ is just $\mathbb{R}^m$, all the operations in the loss computation are the familiar ones.

For inference, we first sample from a uniform prior over the torus. We then discretize and solve the reverse diffusion with a geodesic random walk; however, since the exponential map on the torus (viewed as a quotient space) is just $\exp_{\boldsymbol{\tau}}(\boldsymbol{\delta}) = \boldsymbol{\tau} + \boldsymbol{\delta} \mod 2\pi$, the geodesic random walk is equivalent to the wrapping of the random walk on $\mathbb{R}^m$.

\subsection{Torsional score framework} \label{sec:score_framework}

\begin{wrapfigure}[28]{r}{0.3\textwidth}
\vspace{-10pt}
    \begin{center}
    \includegraphics[width=0.3\textwidth]{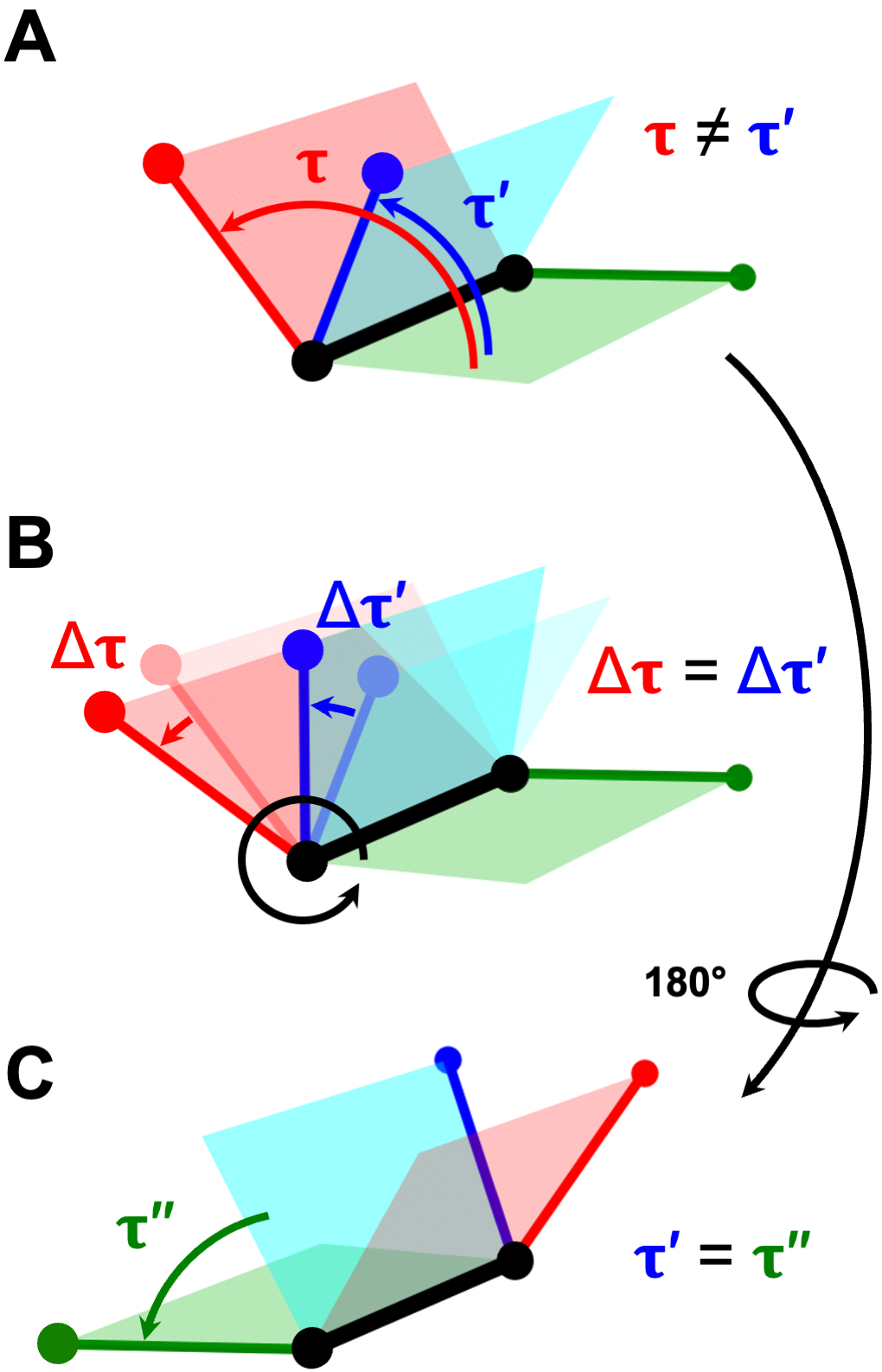}
    \vspace{-12pt}
    \caption{\textbf{A}: The torsion $\tau$ around a bond depends on a choice of neighbors. \textbf{B}: The \emph{change} $\Delta\tau$ caused by a relative rotation is the same for all choices. \textbf{C}: The sign of $\Delta\tau$ is unambiguous because given the same neighbors, $\tau$ does not depend on bond direction.}%, i.e. $\textcolor{green}{\bullet}\bullet\bullet\textcolor{blue}{\bullet}$ v.s. $\textcolor{blue}{\bullet}\bullet\bullet\textcolor{green}{\bullet}$}
    \label{fig:torsion}
    \end{center}
\end{wrapfigure}

While we have defined the diffusion process over intrinsic coordinates, learning a score model $\mathbf{s}(\boldsymbol{\tau}, t)$ directly over intrinsic coordinates is potentially problematic for several reasons. First, the dimensionality $m$ of the torsional space depends on the molecular graph $G$. Second, the mapping from torsional space to physically distinct conformers depends on $G$ and local structures $L$, but it is unclear how to best provide these to a model over $\mathbb{T}^m$. Third, there is no canonical choice of independent intrinsic coordinates $(L, \boldsymbol{\tau})$; in particular, the torsion angle at a rotatable bond can be defined as any of the dihedral angles at that bond, depending on an arbitrary choice of reference neighbors (Figure \ref{fig:torsion} and Appendix~\ref{app:def}). Thus, even with fixed $G$ and $L$, the mapping from $\mathbb{T}^m$ to conformers is ill-defined. This posed a significant challenge to prior works using intrinsic coordinates \citep{ganea2021geomol}.

To circumvent these difficulties, we instead consider a conformer $C\in \mathcal{C}_G$ in terms of its \emph{extrinsic} (or Cartesian) coordinates---that is, as a point cloud in 3D space, defined up to global roto-translation: $\mathcal{C}_G \cong \mathbb{R}^{3n} / SE(3)$. Then, we construct the score model $\mathbf{s}_G(C, t)$ as a function over $\mathcal{C}_G$ rather than $\mathbb{T}^m$. The outputs remain in the tangent space of $\mathbb{T}^m$, which is just $\mathbb{R}^m$. Such a score model is simply an $SE(3)$-\emph{invariant} model over point clouds in 3D space $\mathbf{s}_G: \mathbb{R}^{3n} \times [0, T] \mapsto \mathbb{R}^{m}$ conditioned on $G$. Thus, we have reduced the problem of learning a score on the torus, conditioned on the molecular graph and local structure, to the much more familiar problem of predicting $SE(3)$-invariant scalar quantities---one for each bond---from a 3D conformer.

It may appear that we still need to choose a definition of each torsion angle $\tau_i$ so that we can sample from  $p_{t\mid 0}(\cdot \mid \boldsymbol{\tau})$ during training and solve the reverse SDE over $\boldsymbol{\tau}$ during inference. However, we leverage the following insight: given \emph{fixed local structures}, the action on $C$ of changing a single torsion angle $\tau_i$ by some $\Delta \tau_i$ can be applied without choosing a definition (Figure \ref{fig:torsion}). Geometrically, this action is a (signed) relative rotation of the atoms on opposite sides of the bond and can be applied directly to the atomic coordinates in 3D. The geometric intuition can be stated as follows (proven in Appendix~\ref{app:propositions} and discussed further in Appendix~\ref{app:discuss:torsion}).
\begin{prop} \label{prop:torsion}
Let $(b_i, c_i)$ be a rotatable bond, let $\mathbf{x}_{\mathcal{V}(b_i)}$ be the positions of atoms on the $b_i$ side of the molecule, and let $R(\boldsymbol{\theta}, x_{c_i}) \in SE(3)$ be the rotation by Euler vector $\boldsymbol{\theta}$ about $x_{c_i}$. Then for $C, C' \in \mathcal{C}_G$, if $\tau_i$ is any definition of the torsion angle around bond $(b_i, c_i)$,

\begin{equation} 
    \begin{aligned}
        \tau_i(C') &= \tau_i(C) + \theta\\
        \tau_j(C') &= \tau_j(C) \quad \forall j\neq i
    \end{aligned}
    \qquad \text{if} \qquad
    \exists \mathbf{x} \in C, \mathbf{x'} \in C'\ldotp \quad
    \begin{aligned}
    \mathbf{x}'_{\mathcal{V}(b_i)} &=  \mathbf{x}_{\mathcal{V}(b_i)} \\
    \mathbf{x}'_{\mathcal{V}(c_i)} &=  R\left(\theta \, \mathbf{\hat r}_{b_ic_i}, x_{c_i} \right)\mathbf{x}_{\mathcal{V}(c_i)}
    \end{aligned}
\end{equation}
where $\mathbf{\hat r}_{b_ic_i} = (x_{c_i} - x_{b_i})/||x_{c_i}-x_{b_i}||$.
\end{prop}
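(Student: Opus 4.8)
The plan is to reduce the statement to two elementary facts about dihedral angles under a rotation with a fixed axis: a dihedral is unchanged when all four of its defining atoms undergo one common rigid motion, and it shifts by a controlled amount when exactly one defining atom is rotated about an axis running through the adjacent central atom. Since a torsion angle is an $SE(3)$-invariant function of the atomic coordinates it descends to a well-defined function on $\mathcal{C}_G \cong \mathbb{R}^{3n}/SE(3)$, so I would simply compute with the particular representatives $\mathbf{x} \in C$, $\mathbf{x}' \in C'$ furnished by the hypothesis. The structural fact I would invoke repeatedly is that, because $(b_i, c_i)$ is rotatable, deleting it disconnects $G$ into the components $\mathcal{V}(b_i) \ni b_i$ and $\mathcal{V}(c_i) \ni c_i$, with $(b_i, c_i)$ the \emph{only} edge having an endpoint in each; hence every atom --- and every reference neighbor used to define any torsion --- lies on exactly one side. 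Also, $R(\theta\,\mathbf{\hat r}_{b_ic_i}, x_{c_i})$ fixes $x_{c_i}$ and, by hypothesis, fixes $x_{b_i}$, so it fixes the bond segment and the unit vector $\mathbf{\hat r}_{b_ic_i}$.

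\emph{Change of $\tau_i$.} I would first fix an arbitrary definition of $\tau_i$, namely the dihedral $a\!-\!b_i\!-\!c_i\!-\!d$ for some neighbors $a \in \mathcal{V}(b_i)$ of $b_i$ and $d \in \mathcal{V}(c_i)$ of $c_i$, and pass to a frame with $\mathbf{\hat r}_{b_ic_i}$ along $\hat z$. There the dihedral equals, up to the fixed sign convention, the difference of the azimuthal angles of the $xy$-projections of $x_a - x_{b_i}$ and $x_d - x_{c_i}$. Under the map $a, b_i, c_i$ are fixed while $x_d - x_{c_i}$ is rotated by $\theta$ about $\hat z$, so its projection turns by $\theta$ and $\tau_i$ changes by $\pm\theta$; the sign is pinned down by the bond-direction independence of a dihedral with fixed reference atoms (Figure~\ref{fig:torsion}C), which fixes the orientation convention so the Euler vector $\theta\,\mathbf{\hat r}_{b_ic_i}$ yields exactly $+\theta$. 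To get this for \emph{every} definition I would note that two choices $(a,d), (a',d')$ differ by an azimuthal-angle difference built only from $\{a,a',b_i\} \subseteq \mathcal{V}(b_i)$ (all fixed) plus one built only from $\{d,d',c_i\} \subseteq \mathcal{V}(c_i)$, where $d, d'$ are rotated by the same $\theta$ about the same fixed axis through $x_{c_i}$; both pieces are therefore invariant, so every definition of $\tau_i$ increases by $+\theta$.

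\emph{Invariance of $\tau_j$, $j\ne i$.} For an arbitrary definition of $\tau_j$ --- a dihedral $A\!-\!b_j\!-\!c_j\!-\!D$ --- the bonded pair $b_j, c_j$ lies on a single side (as $\{b_j,c_j\} \ne \{b_i,c_i\}$), and I would split into cases. If $b_j, c_j \in \mathcal{V}(b_i)$, every defining atom lies in $\mathcal{V}(b_i)$, the only exception being a reference atom equal to $c_i$ (possible only if $b_j$ or $c_j$ is $b_i$) --- but $x_{c_i}$ is a fixed point, so all four defining atoms are fixed and $\tau_j$ is unchanged. If $b_j, c_j \in \mathcal{V}(c_i)$ with no defining atom equal to $b_i$, all four are moved by the single rigid transformation $R(\theta\,\mathbf{\hat r}_{b_ic_i}, x_{c_i})$, so the dihedral is invariant. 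The remaining case, $b_j, c_j \in \mathcal{V}(c_i)$ with a reference atom equal to $b_i$, forces (say) $b_j = c_i$ and $A = b_i$, so $\tau_j$ is the dihedral $b_i\!-\!c_i\!-\!c_j\!-\!D$ whose first edge vector $x_{c_i}-x_{b_i}$ is parallel to $\mathbf{\hat r}_{b_ic_i}$, while the rotation fixes $b_i, c_i$ and sends $c_j, D$ to $x_{c_i} + R_0(\,\cdot\,-x_{c_i})$ with $R_0 \in SO(3)$ the rotation by $\theta$ about $\mathbf{\hat r}_{b_ic_i}$. Substituting into the $\atantwo$-formula for the dihedral and using $R_0(x_{c_i}-x_{b_i}) = x_{c_i}-x_{b_i}$, $R_0\mathbf{u}\times R_0\mathbf{v} = R_0(\mathbf{u}\times\mathbf{v})$, and $R_0^{\top} = R_0^{-1}$, both arguments of $\atantwo$ come out unchanged; the subcase $D = b_i$ is symmetric. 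So $\tau_j(C') = \tau_j(C)$ for all $j\ne i$ and every definition, which finishes the proof.

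\emph{Main obstacle.} The hard part will be exactly this last subcase (together with its sign-bookkeeping cousin in the $\tau_i$ step): a torsion on the rotated side that is \emph{defined through} the pivot atom $b_i$ (or $c_i$), where one must use that the reference edge vector lies precisely along the rotation axis --- every other configuration collapses to ``nothing moves'' or ``everything moves rigidly.'' Getting the $+\theta$ sign (not $-\theta$) pinned down cleanly is the secondary point requiring care, and that is what Figure~\ref{fig:torsion}C supplies.
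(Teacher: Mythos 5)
Your proposal is correct in structure and follows essentially the same route as the paper's proof: a direct computation showing that any dihedral defined at $(b_i,c_i)$ shifts by $\theta$, plus a case analysis for $j\neq i$ driven by the fact that $(b_i,c_i)$ is the only edge joining $\mathcal{V}(b_i)$ and $\mathcal{V}(c_i)$, so the four atoms defining any other torsion are either all fixed, all moved by one common rigid map, or fixed because the exceptional reference atom is the rotation center $x_{c_i}$. One structural difference: the paper disposes of your two $\mathcal{V}(c_i)$-side cases with a ``without loss of generality'' (rotating the $c_i$ side by $\theta$ and the $b_i$ side by $-\theta$ give the same element of $\mathcal{C}_G\cong\mathbb{R}^{3n}/SE(3)$), whereas you verify them explicitly; your computation in the subcase $b_j=c_i$, $A=b_i$ is correct, since $R_0$ fixes the axis vector and commutes with cross products, so both the cosine and sine expressions for the dihedral are preserved.

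The one step that does not hold as stated is your appeal to bond-direction independence ($\tau_{abcd}=\tau_{dcba}$, Figure~\ref{fig:torsion}C) to pin down the $+\theta$ sign. Negating the dihedral convention also satisfies that symmetry, so bond-direction independence only shows that the increment is well defined (independent of traversal direction and of the choice of reference neighbors); it cannot distinguish $+\theta$ from $-\theta$ relative to the Euler vector $\theta\,\mathbf{\hat r}_{b_ic_i}$. The sign has to be read off the explicit definition in Equation~\ref{eq:torsion_def}, namely $\sin\tau_{abcd}\propto \mathbf{u}_{bc}\cdot(\mathbf{n}_{abc}\times\mathbf{n}_{bcd})$, which forces the azimuths in your bond-aligned frame to be measured right-handed about $\mathbf{\hat r}_{b_ic_i}$ (pointing from $b_i$ to $c_i$). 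Carrying your projection argument through with that orientation convention---or doing the Rodrigues-formula computation the paper performs---does yield exactly $+\theta$, so the gap is small and easily repaired, but the justification you cite needs to be replaced by this explicit check against the definition.
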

To apply a torsion update $\Delta\boldsymbol{\tau} = (\Delta\tau_1,\ldots\Delta\tau_m)$ involving all bonds, we apply $\Delta\tau_i$ sequentially in any order. Then, since training and inference only make use of torsion updates $\Delta\boldsymbol{\tau}$, we work solely in terms of 3D point clouds and updates applied to them. To draw local structures $L$ from RDKit, we draw full 3D conformers $C \in \mathcal{C}_G$ and then randomize all torsion angles to sample uniformly over $\mathbb{T}^m$. To solve the reverse SDE, we repeatedly predict torsion updates directly from, and apply them directly to, the 3D point cloud. Therefore, since our method never requires a choice of reference neighbors for any $\tau_i$, it is manifestly invariant to such a choice. These procedures are detailed in Appendix \ref{app:summary_procedures}.

\subsection{Parity equivariance} \label{sec:parity}

The torsional score framework presented thus far requires an $SE(3)$-invariant model. However, an additional symmetry requirement arises from the fact that the underlying physical energy is invariant, or extremely nearly so, under \emph{parity inversion} \citep{quack2002important}. Thus our learned density should respect $p(C) = p(-C)$ where $-C = \{-\mathbf{x} \mid \mathbf{x} \in C\}$. In terms of the conditional distribution over torsion angles, we require $p(\boldsymbol{\tau}(C) \mid L(C)) = p(\boldsymbol{\tau}(-C) \mid L(-C))$. Then,
\begin{prop} \label{prop:parity}
    If $p(\boldsymbol{\tau}(C) \mid L(C)) = p(\boldsymbol{\tau}(-C) \mid L(-C))$, then for all diffusion times $t$,
    \begin{equation}
        \nabla_{\boldsymbol{\tau}} \log p_t(\boldsymbol{\tau}(C) \mid L(C)) = -\nabla_{\boldsymbol{\tau}} \log p_t(\boldsymbol{\tau}(-C) \mid L(-C)) 
    \end{equation}
\end{prop}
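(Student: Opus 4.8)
The plan is to reduce the statement to the single geometric fact that parity inversion negates every torsion angle, and then to push that symmetry through the diffusion convolution. First I would establish that $\boldsymbol{\tau}(-C) = -\boldsymbol{\tau}(C)$: a dihedral angle is defined by an \emph{oriented} rotation, so applying the point reflection $\mathbf{x}\mapsto -\mathbf{x}$ to all atoms reverses this orientation and flips the sign of each dihedral. Since point inversion in $\mathbb{R}^3$ differs from a mirror reflection only by a rotation, and rotations leave all internal coordinates (hence all $\tau_i$) unchanged, it suffices to check the sign flip for one mirror reflection, which is immediate from the definition in Appendix~\ref{app:def}. Consequently, writing $p_0(\cdot \mid L)$ for the data density over $\mathbb{T}^m$ at fixed local structure $L$, the hypothesis $p(\boldsymbol{\tau}(C)\mid L(C)) = p(\boldsymbol{\tau}(-C)\mid L(-C))$ becomes the identity of functions $p_0(\boldsymbol{\tau}\mid L(C)) = p_0(-\boldsymbol{\tau}\mid L(-C))$ for all $\boldsymbol{\tau}\in\mathbb{T}^m$.

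Next I would use that the forward diffusion acts only on torsions and has an even perturbation kernel. Because the process is conditioned on $L$ and leaves it fixed, $p_t(\cdot\mid L)$ is obtained from $p_0(\cdot\mid L)$ by convolution on $\mathbb{T}^m$ with the wrapped normal kernel $p_{t\mid 0}(\boldsymbol{\tau}'\mid\boldsymbol{\tau})$ of \eqref{eq:torus_score}. This kernel depends only on $\|\boldsymbol{\tau}-\boldsymbol{\tau}'+2\pi\mathbf{d}\|$ summed over $\mathbf{d}\in\mathbb{Z}^m$, so sending $(\boldsymbol{\tau},\boldsymbol{\tau}')\mapsto(-\boldsymbol{\tau},-\boldsymbol{\tau}')$ and reindexing $\mathbf{d}\mapsto-\mathbf{d}$ gives $p_{t\mid 0}(-\boldsymbol{\tau}'\mid-\boldsymbol{\tau}) = p_{t\mid 0}(\boldsymbol{\tau}'\mid\boldsymbol{\tau})$. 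I would then write $p_t(-\boldsymbol{\tau}'\mid L(-C)) = \int p_{t\mid 0}(-\boldsymbol{\tau}'\mid\boldsymbol{\tau})\,p_0(\boldsymbol{\tau}\mid L(-C))\,d\boldsymbol{\tau}$, substitute $\boldsymbol{\tau}\mapsto-\boldsymbol{\tau}$, apply kernel evenness, and finally apply the hypothesis in the form $p_0(-\boldsymbol{\tau}\mid L(-C)) = p_0(\boldsymbol{\tau}\mid L(C))$, obtaining $p_t(-\boldsymbol{\tau}'\mid L(-C)) = p_t(\boldsymbol{\tau}'\mid L(C))$; that is, the parity symmetry is preserved at every diffusion time.

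Finally I would differentiate. The identity $p_t(\boldsymbol{\tau}\mid L(C)) = p_t(-\boldsymbol{\tau}\mid L(-C))$ holds as functions of $\boldsymbol{\tau}$ on $\mathbb{T}^m$; taking $\log$ and $\nabla_{\boldsymbol{\tau}}$ and using the chain rule yields $\nabla_{\boldsymbol{\tau}}\log p_t(\boldsymbol{\tau}\mid L(C)) = -\bigl(\nabla\log p_t(\cdot\mid L(-C))\bigr)(-\boldsymbol{\tau})$. Evaluating at $\boldsymbol{\tau}=\boldsymbol{\tau}(C)$ and using $-\boldsymbol{\tau}(C) = \boldsymbol{\tau}(-C)$ gives exactly $\nabla_{\boldsymbol{\tau}}\log p_t(\boldsymbol{\tau}(C)\mid L(C)) = -\nabla_{\boldsymbol{\tau}}\log p_t(\boldsymbol{\tau}(-C)\mid L(-C))$.

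The main obstacle is Step 1 — making $\boldsymbol{\tau}(-C) = -\boldsymbol{\tau}(C)$ precise, and being careful that $L(-C)$ need not equal $L(C)$ (ring and cycle conformations can themselves flip under parity), so one must carry the conditioning on $L(-C)$ through the whole argument rather than hoping it collapses to $L(C)$. Everything afterwards is a routine change of variables in the convolution integral followed by a chain-rule differentiation; smoothness and strict positivity of $p_t$ for $t>0$, needed to take $\nabla\log$, are automatic since convolution with the smooth, everywhere-positive wrapped-normal kernel regularizes $p_0$.
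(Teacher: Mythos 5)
Your proposal is correct and follows essentially the same route as the paper's proof: establish $\boldsymbol{\tau}(-C)=-\boldsymbol{\tau}(C)$ from the dihedral definition, propagate the parity symmetry through the diffusion via the parity-invariant wrapped-normal perturbation kernel (the paper writes the same convolution identity, leaving the kernel's evenness implicit where you verify it by reindexing $\mathbf{d}\mapsto-\mathbf{d}$), and conclude by the chain rule under $\boldsymbol{\tau}\mapsto-\boldsymbol{\tau}$. Your additional care about carrying $L(-C)$ through the conditioning and about smoothness/positivity of $p_t$ is consistent with, and slightly more explicit than, the paper's argument.
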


Because the score model seeks to learn $\mathbf{s}_G(C, t) = \nabla_{\boldsymbol{\tau}} \log p_t(\boldsymbol{\tau}(C) \mid L(C))$, we must have $\mathbf{s}_G(C, t) = -\mathbf{s}_G(-C, t)$. Thus, the score model must be \emph{invariant} under $SE(3)$ but \emph{equivariant} (change sign) under parity inversion of the input point cloud--- i.e. it must output a set of \emph{pseudoscalars} in $\mathbb{R}^m$.

\subsection{Score network architecture} \label{sec:score_model}

Based on sections \ref{sec:score_framework} and \ref{sec:parity}, the desiderata for the score model are:
\begin{center}
    \emph{Predict a pseudoscalar $\delta\tau_i := \partial \log p / \partial \tau_i \in\mathbb{R}$ that is $SE(3)$-invariant and parity equivariant\\for every rotatable bond in a 3D point cloud representation of a conformer.\\}
\end{center}

While there exist several GNN architectures which are $SE(3)$-equivariant \citep{jing2020learning, satorras2021n}, their $SE(3)$-invariant outputs are also parity invariant and, therefore, cannot satisfy the desired symmetry. Instead, we leverage the ability of equivariant networks based on tensor products \citep{thomas2018tensor, e3nn} to produce pseudoscalar outputs.

Our architecture, detailed in Appendix \ref{app:architecture}, consists of an embedding layer, a series of atomic convolution layers, and a final bond convolution layer. The first two closely follow the architecture of Tensor Field Networks \citep{thomas2018tensor}, and produce learned feature vectors for each atom. The final bond convolution layer constructs tensor product filters spatially centered on every rotatable bond and aggregates messages from neighboring atom features. We extract the pseudoscalar outputs of this filter to produce a single real-valued pseudoscalar prediction $\delta\tau_i$ for each rotatable bond. 

Naively, the bond convolution layer could be constructed the same way as the atomic convolution layers, i.e., with spherical harmonic filters. However, to supply information about the orientation of the bond about which the torsion occurs, we construct a filter from the product of the spherical harmonics with a representation of the bond (Figure \ref{fig:overview}D). Because the convolution conceptually resembles computing the torque, we call this final layer the \emph{pseudotorque} layer.

\subsection{Likelihood} \label{sec:likelihood}

By using the probability flow ODE, we can compute the likelihood of any sample $\boldsymbol{\tau}$ as follows \citep{song2021score,de2022riemannian}:
\begin{equation} \label{eq:likelihood}
    \log p_0(\boldsymbol{\tau}_0) = \log p_T(\boldsymbol{\tau}_T) - \frac{1}{2} \int_0^T g^2(t) \; \nabla_{\boldsymbol{\tau}} \cdot \mathbf{s}_G(\boldsymbol{\tau}_t, t) \; dt
\end{equation}
In \cite{song2021score}, the divergence term is approximated via Hutchinson's method \citep{hutchinson1989stochastic}, which gives an unbiased estimate of $\log p_0(\boldsymbol{\tau})$. However, this gives a \emph{biased} estimate of $ p_0(\boldsymbol{\tau})$, which is unsuitable for our applications. Thus, we compute the divergence term directly, which is feasible here (unlike in Euclidean diffusion) due to the reduced dimensionality of the torsional space.

The above likelihood is in \textit{torsional} space $p_G(\boldsymbol{\tau} \mid L), \boldsymbol{\tau} \in \mathbb{T}^m$, but to enable compatibility with the Boltzmann measure $e^{-E(\mathbf{x})/kT}$, it is desirable to interconvert this with a likelihood in \textit{Euclidean} space $p(\mathbf{x} \mid L), \mathbf{x} \in \mathbb{R}^{3n}$. A factor is necessary to convert between the volume element in torsional space and in Euclidean space (full derivation in Appendix~\ref{app:propositions}):
\begin{prop} \label{prop:euclidean}
Let $\mathbf{x} \in C(\boldsymbol{\tau}, L)$ be a centered\footnote{Additional formalism is needed for translations, but it is independent of the conformer and can be ignored.}
conformer in Euclidean space. Then,
\begin{equation}
    p_G(\mathbf{x} \mid L) = \frac{p_G(\boldsymbol{\tau} \mid L)}{ 8 \pi^2 \sqrt{\det g}}
    \quad \mathrm{where} \ \
    g_{\alpha\beta} = 
    \sum_{k=1}^{n} 
    J^{(k)}_{\alpha} \cdot J^{(k)}_{\beta}
\end{equation}
where the indices $\alpha,\beta$ are integers between 1 and $m+3$. For $1 \leq \alpha \leq m$, $J^{(k)}_\alpha$ is defined as
    \begin{align}
    \label{eqn:basisvec}
        J^{(k)}_{i} &= \tilde J^{(k)}_{i} - \frac 1 n \sum_{\ell=1}^{n} \tilde J^{(\ell)}_{i}
        \quad
        \mathrm{with} \ \
        \tilde J^{(\ell)}_{i} =
        \begin{cases}
            0 & \ell \in \mathcal{V}(b_i), \\
            \frac{\mathbf{x}_{b_i} - \mathbf{x}_{c_i}} {||\mathbf{x}_{b_i} - \mathbf{x}_{c_i}||}
            \times
            \left( \mathbf{x}_\ell - \mathbf{x}_{c_i} \right),
            & \ell \in \mathcal{V}(c_i),
        \end{cases}
    \end{align}
    and for $\alpha \in \{m+1, m+2, m+3\}$ as
    \begin{align}
    \label{eq:omegajacobian}
        J^{(k)}_{m+1} &= \mathbf{x}_k \times \hat{x},
        \qquad
        J^{(k)}_{m+2} = \mathbf{x}_k \times \hat{y},
        \qquad
        J^{(k)}_{m+3} = \mathbf{x}_k \times \hat{z},
        \qquad
    \end{align}
    where $(b_i, c_i)$ is the freely rotatable bond for torsion angle $i$, $\mathcal{V}(b_i)$ is the set of all nodes on the same side of the bond as $b_i$, and $\hat x, \hat y, \hat z$ are the unit vectors in the respective directions.
\end{prop}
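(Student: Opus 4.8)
The plan is to exhibit the set of centered conformers sharing the fixed local structure $L$ as an $(m+3)$-dimensional submanifold $\mathcal{M}_L\subset\mathbb{R}^{3n}$ and apply the change-of-variables (area) formula to an explicit parametrization of $\mathcal{M}_L$ by the $m$ torsion angles together with a global rotation. First I would note that, by Proposition~\ref{prop:torsion}, once $L$ is fixed a conformer depends smoothly on $\boldsymbol{\tau}\in\mathbb{T}^m$ up to a global roto-translation; centering removes the translations, so choosing a smooth local section $\bar{\mathbf{x}}(\boldsymbol{\tau})$ of centered, fixed-orientation representatives gives a map
$$\Psi:\mathbb{T}^m\times SO(3)\longrightarrow\mathcal{M}_L,\qquad(\boldsymbol{\tau},\mathbf{R})\mapsto\mathbf{R}\,\bar{\mathbf{x}}(\boldsymbol{\tau}),$$
which is a diffeomorphism onto its image off a measure-zero set (conformers with a nontrivial rotational symmetry, or degenerate configurations where the $m+3$ infinitesimal motions below become dependent). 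Here $p_G(\mathbf{x}\mid L)$ is understood as a density with respect to the Riemannian ($\mathcal{H}^{m+3}$) volume that $\mathcal{M}_L$ inherits from $\mathbb{R}^{3n}$, and the distribution on $\mathbf{x}$ is obtained by drawing $\boldsymbol{\tau}\sim p_G(\cdot\mid L)$ together with an independent uniform global orientation, the latter being the only choice consistent with the $SE(3)$-invariance of the underlying density.

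Next I would put a product Riemannian metric on the source --- the flat metric on $\mathbb{T}^m$ and, on $SO(3)$, the bi-invariant metric normalized so that the standard infinitesimal generators of rotation about the three coordinate axes are orthonormal --- and compute the Jacobian factor of $\Psi$ by pushing forward orthonormal frames. On the torus I use $\{\partial/\partial\tau_i\}$; on $SO(3)$ I use the right-invariant vector fields extending those generators, which are orthonormal everywhere because the metric is bi-invariant. Under $d\Psi$, $\partial/\partial\tau_i$ maps to the vector whose $k$-th block is $\mathbf{R}\,\partial_{\tau_i}\bar{\mathbf{x}}_k$; by Proposition~\ref{prop:torsion} this infinitesimal torsion rotates only the $c_i$-side atoms about the bond axis, and after accounting for the recentering of $\bar{\mathbf{x}}$ it is exactly $J^{(k)}_i$ of \eqref{eqn:basisvec} (equivariance of the cross product under $\mathbf{R}$ lets one drop $\mathbf{R}$ and read the formula off the actual conformer $\mathbf{x}$). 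Similarly, the generator of rotation about $\hat e_j$ maps to the vector with $k$-th block $\hat e_j\times\mathbf{x}_k=\pm J^{(k)}_{m+j}$ of \eqref{eq:omegajacobian}, with no recentering needed since $\sum_k\mathbf{x}_k=\mathbf{0}$. Hence the matrix of $d\Psi$ in these orthonormal frames has columns $J^{(k)}_\alpha$, so its Jacobian factor is $\sqrt{\det\big((d\Psi)^\top d\Psi\big)}=\sqrt{\det g}$ with $g_{\alpha\beta}=\sum_k J^{(k)}_\alpha\cdot J^{(k)}_\beta$, and the area formula turns a source density $\rho(\boldsymbol{\tau},\mathbf{R})$ into the density $\rho(\Psi^{-1}(\mathbf{x}))/\sqrt{\det g}$ on $\mathcal{M}_L$. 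Taking $\rho(\boldsymbol{\tau},\mathbf{R})=p_G(\boldsymbol{\tau}\mid L)/\mathrm{Vol}(SO(3))$ (the product of the torsional density with the uniform orientation density) gives $p_G(\mathbf{x}\mid L)=p_G(\boldsymbol{\tau}\mid L)/\big(\mathrm{Vol}(SO(3))\,\sqrt{\det g}\,\big)$.

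It then remains only to evaluate $\mathrm{Vol}(SO(3))=8\pi^2$ for this normalization. I would do this in the exponential chart (geodesic normal coordinates), which is a diffeomorphism from the ball of radius $\pi$ and in which the volume element is $\tfrac{2(1-\cos r)}{r^2}\,d^3\omega$ with $r$ the geodesic distance from the identity, so that $\int_0^\pi\tfrac{2(1-\cos r)}{r^2}\cdot 4\pi r^2\,dr=8\pi\int_0^\pi(1-\cos r)\,dr=8\pi^2$ --- or, equivalently, from $SO(3)\cong S^3/\{\pm 1\}$ with the round three-sphere of radius $2$. Substituting yields the stated identity. The main obstacle is not any single computation but keeping the normalizations mutually consistent: the metric making the rotation generators orthonormal must be the same one whose total volume is $8\pi^2$ and with respect to which a uniform orientation has constant density, and one must verify that $\Psi$ really is a diffeomorphism onto its image away from a null set (handling rotationally symmetric conformers and collinear/degenerate configurations) so that the area formula applies on all of $\mathcal{M}_L$. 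The remaining ingredients --- identifying the pushed-forward frame vectors with the $J^{(k)}_\alpha$ and the recentering bookkeeping in \eqref{eqn:basisvec} --- are routine given Proposition~\ref{prop:torsion}.
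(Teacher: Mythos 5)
Your proposal is correct and follows essentially the same route as the paper's proof: both parametrize the manifold of centered, fixed-$L$ conformers by the torsion angles together with a global rotation, identify the tangent (Jacobian) vectors with the $J^{(k)}_\alpha$ of \eqref{eqn:basisvec}--\eqref{eq:omegajacobian} (including the recentering term), obtain the induced volume factor $\sqrt{\det g}$, and marginalize a uniform orientation to produce the $8\pi^2$. The only differences are presentational --- you make the product metric, orthonormal frames, and area formula explicit and compute $\mathrm{Vol}(SO(3))=8\pi^2$ from the exponential chart, whereas the paper works directly in coordinates $(\boldsymbol{\tau},\omega)$ and cites a reference for that volume.
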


\subsection{Energy-based training} \label{sec:energy}

By computing likelihoods, we can train torsional diffusion models to match the Boltzmann distribution over torsion angles using the energy function. At a high level, we minimize the usual score matching loss, but with simulated samples from the Boltzmann distribution rather than data samples. The procedure therefore consists of two stages: resampling and score matching, which are tightly coupled during training (Algorithm 1). In the \emph{resampling} stage, we use the model as an importance sampler for the Boltzmann distribution, where Proposition \ref{prop:euclidean} is used to compute the (unnormalized) torsional Boltzmann density $\tilde{p}_G(\boldsymbol{\tau}\mid L)$. In the \emph{score-matching} stage, the importance weights are used to approximate the denoising score-matching loss with expectations taken over $\tilde{p}_G(\boldsymbol{\tau}\mid L)$. As the model learns the score, it improves as an importance sampler. 

\begin{wrapfigure}[11]{r}{0.515\textwidth}
\vspace{-12pt}
\begin{small}
\begin{algorithm}[H]
\caption{Energy-based training epoch}\label{alg:energy}
\KwIn{Boltzmann density $\tilde{p}$, training pairs $\{(G_i, L_i)\}_i$, torsional diffusion model $q$}
\For{\textbf{each} $(G_i, L_i)$}{
    Sample $\boldsymbol{\tau}_1, \ldots \boldsymbol{\tau}_K \sim q_{G_i}(\boldsymbol{\tau}\mid L_i)$\;
    \For{$k\leftarrow 1$ \KwTo $K$}{
        $\tilde{w}_k = \tilde{p}_{G_i}(\boldsymbol{\tau}_k\mid L_i)/q_{G_i}(\boldsymbol{\tau}_k\mid L_i)$\;
    }
    Approximate $J_\text{DSM}$ for $p_0 \propto \tilde{p}$ using $\{(\tilde{w}_i, \boldsymbol{\tau}_i)\}_i$\;
    Minimize $J_\text{DSM}$\;
}
\end{algorithm}
\end{small}
\end{wrapfigure}

This training procedure differs substantially from that of existing Boltzmann generators, which are trained as flows with a loss that directly depends on the model density. In contrast, we \emph{train} the model as a score-based model, but \emph{use} it as a flow---both during training and inference---to generate samples. The model density is needed only to reweight the samples to approximate the target density. Since in principle the model used for resampling does not need to be the same as the model being trained,\footnote{For example, if the resampler were perfect, the procedure would reduce to normal denoising score matching.} we can use very few steps (a shallow flow) during resampling to accelerate training, and then increase the number of steps (a deeper flow) for better approximations during inference---an option unavailable to existing Boltzmann generators.

\section{Experiments} \label{sec:experiments}

We evaluate torsional diffusion by comparing the generated and ground-truth conformers in terms of ensemble RMSD (Section \ref{sec:ensemble_quality}) and properties (Section \ref{sec:ensemble_prop}). Section \ref{sec:conf_matching} first discusses a preprocessing procedure required to train a conditional model $p_G(\boldsymbol{\tau}\mid L)$. Section \ref{sec:boltzmann} concludes with torsional Boltzmann generators. See Appendix~\ref{app:results} for additional results, including ablation experiments.

\subsection{Conformer matching} \label{sec:conf_matching}
In focusing on $p_G(\boldsymbol{\tau}\mid L)$, we have assumed that we can sample local structures $L \sim p_G(L)$ with RDKit. While this assumption is very good in terms of RMSD, the RDKit marginal $\hat p_G(L)$ is only an approximation of the ground truth $p_G(L)$. Thus, if we train on the denoising score-matching loss with ground truth conformers---i.e., conditioned on ground truth local structures---there will be a distributional shift at test time, where only approximate local structures from $\hat p_G(L)$ are available. We found that this shift significantly hurts performance.

We thus introduce a preprocessing procedure called \textit{conformer matching}. In brief, for the \emph{training} split only, we substitute each ground truth conformer $C$ with a synthetic conformer $\hat{C}$ with local structures $\hat{L} \sim \hat p_G(L)$ and made as similar as possible to $C$. That is, we use RDKit to generate $\hat{L}$ and change torsion angles $\hat{\boldsymbol{\tau}}$ to minimize $\rmsd(C, \hat{C})$. Naively, we could sample $\hat{L} \sim \hat p_G(L)$ independently for each conformer, but this eliminates any possible dependence between $L$ and $\boldsymbol{\tau}$ that could serve as training signal. Instead, we view the distributional shift as a domain adaptation problem that can be solved by optimally aligning $p_G(L)$ and $\hat p_G(L)$. See Appendix \ref{app:matching} for details.

\subsection{Experimental setup} \label{sec:exp_setup}

\textbf{Dataset}\quad We evaluate on the GEOM dataset \citep{axelrod2020geom}, which provides gold-standard conformer ensembles generated with metadynamics in CREST \citep{pracht2020automated}. We focus on GEOM-DRUGS---the largest and most pharmaceutically relevant part of the dataset---consisting of 304k drug-like molecules (average 44 atoms). To test the capacity to extrapolate to the largest molecules, we also collect from GEOM-MoleculeNet all species with more than 100 atoms into a dataset we call GEOM-XL and use it to evaluate models trained on DRUGS. Finally, we train and evaluate models on GEOM-QM9, a more established dataset but with significantly smaller molecules (average 11 atoms). Results for GEOM-XL and GEOM-QM9 are in Appendix \ref{app:results}.

\textbf{Evaluation}\quad We use the train/val/test splits from \cite{ganea2021geomol} and use the same metrics to compare the generated and ground truth conformer ensembles: Average Minimum RMSD (AMR) and Coverage. These metrics are reported both for Recall (R)---which measures how well the generated ensemble covers the ground-truth ensemble---and Precision (P)---which measures the accuracy of the generated conformers. See Appendix \ref{app:exp_details} for exact definitions and further details. Following the literature, we generate $2K$ conformers for a molecule with $K$ ground truth conformers.

\textbf{Baselines}\quad We compare with the strongest existing methods from Section \ref{sec:background}. Among cheminformatics methods, we evaluate RDKit ETKDG \citep{riniker2015better}, the most established open-source package, and OMEGA \citep{hawkins2010conformer, hawkins2012conformer}, a commercial software in continuous development. Among machine learning methods, we evaluate GeoMol \citep{ganea2021geomol} and GeoDiff \citep{xu2021geodiff}, which have outperformed all previous models on the evaluation metrics. Note that GeoDiff originally used a small subset of the DRUGS dataset, so we retrained it using the splits from \cite{ganea2021geomol}.

\begin{table}[t]
\caption{Quality of generated conformer ensembles for the GEOM-DRUGS test set in terms of Coverage (\%) and Average Minimum RMSD (\AA). We compute Coverage with a threshold of $\delta=0.75$~\AA\ to better distinguish top methods. Note that this is different from most prior works, which used $\delta=1.25$ \AA.}\label{tab:quality}
\centering
\begin{tabular}{l|cccc|cccc} \toprule
                & \multicolumn{4}{c|}{Recall} & \multicolumn{4}{c}{Precision}  \\
                  & \multicolumn{2}{c}{Coverage $\uparrow$} & \multicolumn{2}{c|}{AMR $\downarrow$} & \multicolumn{2}{c}{Coverage $\uparrow$} & \multicolumn{2}{c}{AMR $\downarrow$} \\
Method & Mean & Med & Mean & Med & Mean & Med & Mean & Med \\ \midrule
RDKit ETKDG & 38.4 & 28.6 & 1.058 & 1.002 & 40.9 & 30.8 & 0.995 & 0.895 \\
OMEGA & 53.4 & 54.6 & 0.841 & 0.762 & 40.5 & 33.3 & 0.946 & 0.854 \\

GeoMol & 44.6 & 41.4 & 0.875 & 0.834 & 43.0 & 36.4 & 0.928 & 0.841 \\
GeoDiff & 42.1 & 37.8 & 0.835 & 0.809 & 24.9 & 14.5 & 1.136 & 1.090 \\ %\midrule
Torsional Diffusion & \textbf{72.7} & \textbf{80.0} & \textbf{0.582} & \textbf{0.565} & \textbf{55.2} & \textbf{56.9} & \textbf{0.778} & \textbf{0.729}     \\ \bottomrule
\end{tabular}
\end{table}

\subsection{Ensemble RMSD} \label{sec:ensemble_quality}

Torsional diffusion significantly outperforms all previous methods on GEOM-DRUGS (Table \ref{tab:quality} and Figure \ref{fig:coverage}), reducing by 30\% the average minimum recall RMSD and by 16\% the precision RMSD relative to the previous state-of-the-art method. Torsional diffusion is also the first ML method to consistently generate better ensembles than OMEGA. As OMEGA is a well-established product used in industry, this represents an essential step towards establishing the utility of conformer generation with machine learning. 

Torsional diffusion offers specific advantages over both GeoDiff and GeoMol, the most advanced prior machine learning methods. GeoDiff, a Euclidean diffusion model, requires 5000 denoising steps to obtain the results shown, whereas our model---thanks to the reduced degrees of freedom---requires only 20 steps. In fact, our model outperforms GeoDiff with as few as 5 denoising steps. As seen in Table \ref{tab:runtime}, this translates to enormous runtime improvements. 

Compared to torsional diffusion, GeoMol similarly makes use of intrinsic coordinates. However, since GeoMol can only access the molecular graph, it is less suited for reasoning about relationships that emerge only in a spatial embedding, especially between regions of the molecule that are distant on the graph. Our extrinsic-to-intrinsic score framework---which gives direct access to spatial relationships---addresses precisely this issue. The empirical advantages are most evident for the large molecules in GEOM-XL, on which GeoMol fails to improve consistently over RDKit (Appendix \ref{app:results}). On the other hand, because GeoMol requires only a single-forward pass, it retains the advantage of faster runtime compared to diffusion-based methods.

\begin{figure}[t]
    \caption{Mean coverage for recall (\emph{left}) and precision (\emph{right}) when varying the threshold value $\delta$ on GEOM-DRUGS.}\label{fig:coverage}
    \centering
    \includegraphics[width=0.495\textwidth]{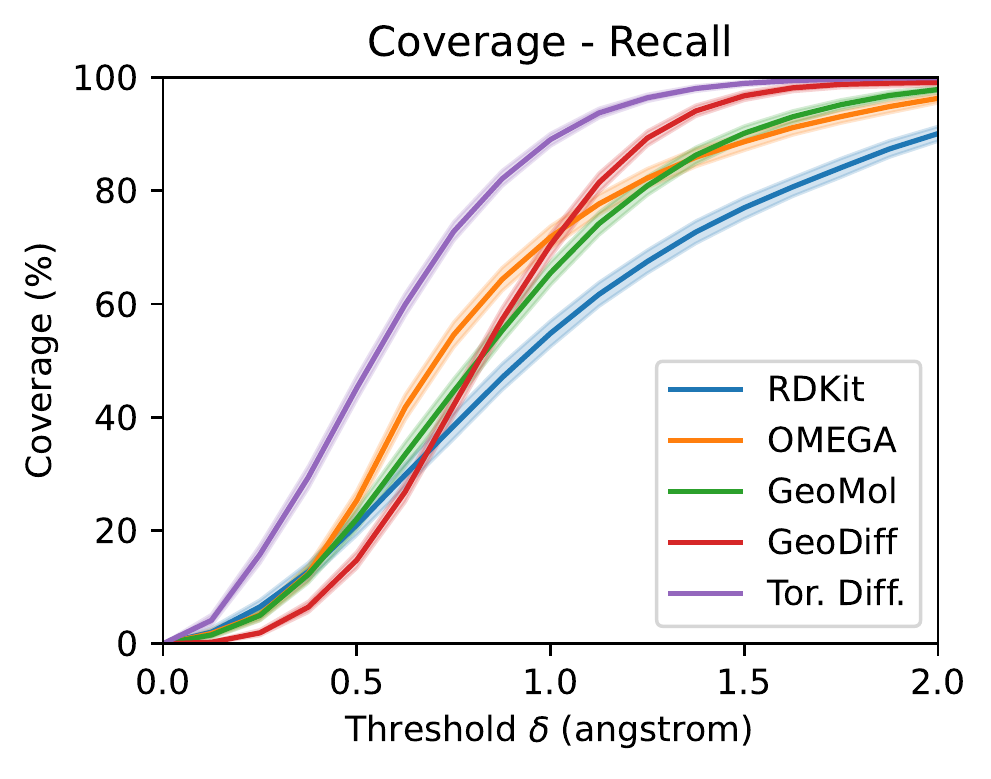}
    \includegraphics[width=0.495\textwidth]{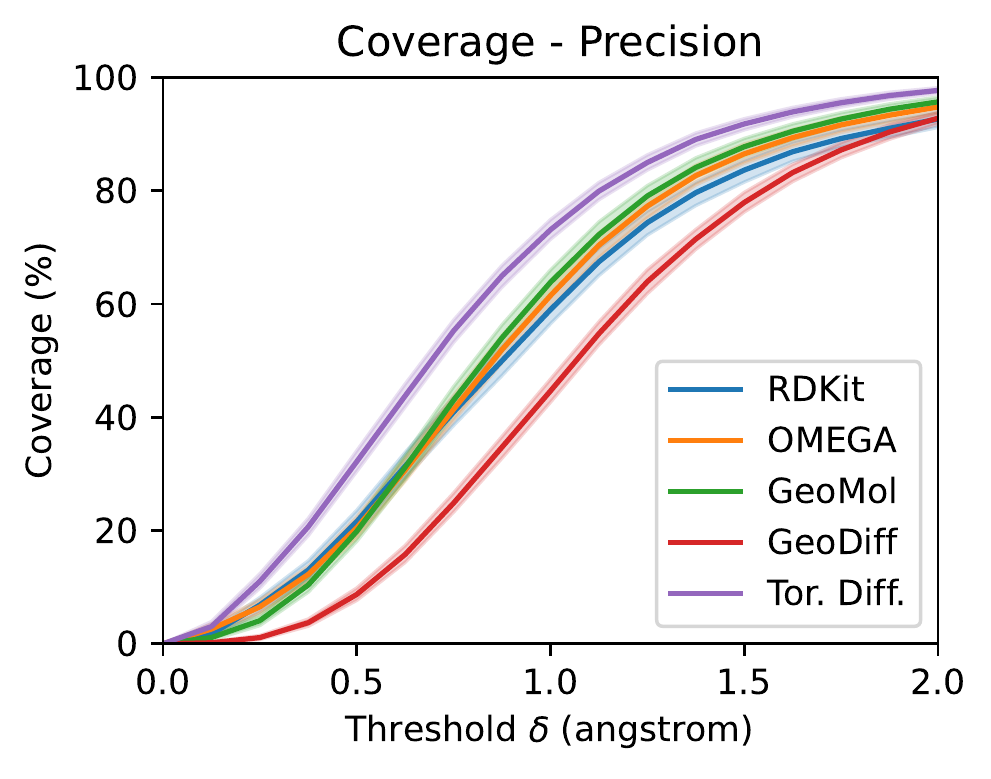}
\end{figure}

\subsection{Ensemble properties} \label{sec:ensemble_prop}

While RMSD gives a \emph{geometric} way to evaluate ensemble quality, we also consider the \emph{chemical} similarity between generated and ground truth ensembles. For a random 100-molecule subset of DRUGS, we generate $\min(2K, 32)$ conformers per molecule, relax the conformers with GFN2-xTB \citep{bannwarth2019gfn2},\footnote{Results without relaxation (which are less chemically meaningful) are in Appendix \ref{app:results}.} and compare the Boltzmann-weighted properties of the generated and ground truth ensembles. Specifically, the following properties are computed with xTB \citep{bannwarth2019gfn2}: energy $E$, dipole moment $\mu$, HOMO-LUMO gap $\Delta \epsilon$, and the minimum energy $E_{\min}$. The median errors for torsional diffusion and the baselines are shown in Table \ref{tab:boltzmann}. Our method produces the most chemically accurate ensembles, especially in terms of energy. In particular, we significantly improve over GeoMol and GeoDiff in finding the lowest-energy conformers that are only (on median) 0.13 kcal/mol higher in energy than the global minimum.

\begin{table}[t]
\parbox{.55\linewidth}{
    \caption{Median AMR and runtime (core-secs per conformer) of machine learning methods, evaluated on CPU for comparison with RDKit.} \label{tab:runtime}
    \centering
    \begin{tabular}{lcccc}\toprule  
    Method & Steps &  AMR-R & AMR-P & Runtime \\\midrule
    RDKit & -  & 1.002 & 0.895 &    \textbf{0.10}       \\
    GeoMol & - & 0.834 & 0.841 & 0.18 \\
    GeoDiff & 5000 & 0.809 & 1.090 & 305           \\ \midrule
    \multirow{3}{*}{\makecell{Torsional\\Diffusion}} 
    & 5       & 0.685 & 0.963 & 1.76          \\
    & 10      & 0.580 & 0.791 & 2.82          \\
    & 20      & \textbf{0.565} & \textbf{0.729} & 4.90          \\ \bottomrule
    %& 50      & {0.557} & {0.699} & 17.09          \\ \bottomrule
    \end{tabular}
}\hfill
\parbox{.43\linewidth}{
    \caption{Median absolute error of generated v.s. ground truth ensemble properties. $E, \Delta\epsilon, E_{\min}$ in kcal/mol, $\mu$ in debye.}\label{tab:properties}
    \centering
    \begin{tabular}{lcccc}
    \toprule  
    Method & $E$ & $\mu$ & $\Delta \epsilon$ & $E_{\min}$ \\\midrule
    RDKit & 0.81 & 0.52 & 0.75 & 1.16 \\
    OMEGA & 0.68 & 0.66 & 0.68 & 0.69 \\
    GeoMol & 0.42 & \textbf{0.34} & 0.59 & 0.40 \\
    GeoDiff & 0.31 & {0.35} & 0.89 & 0.39 \\
    Tor. Diff. & \textbf{0.22} & {0.35} & \textbf{0.54} & \textbf{0.13} \\ \bottomrule
    \end{tabular}
}
\end{table}

\clearpage

\subsection{Torsional Boltzmann generator} \label{sec:boltzmann}

\begin{wraptable}[14]{r}{6.2cm}
\vspace{-16pt}
\caption{Effective sample size (out of 32) given by importance sampling weights over the torsional Boltzmann density.}\label{tab:boltzmann}
\begin{tabular}{lcccc} \toprule
& & \multicolumn{3}{c}{Temp. (K)} \\ \cmidrule(lr){3-5}
Method & Steps & 1000 & 500 & 300 \\ \midrule
Uniform & -- & 1.71 & 1.21 & 1.02 \\ \midrule
\multirow{3}{*}{AIS} 
& 5 & 2.20 & 1.36 & 1.18 \\
& 20 & 3.12 & 1.76 & 1.30  \\ 
& 100 & 6.72 & 3.12 & 2.06 \\ \midrule
\multirow{2}{*}{\makecell[l]{Torsional\\BG}} 
& 5 & 7.28 & 3.60 & 3.04 \\
& 20 & \textbf{11.42} & \textbf{6.42} & \textbf{4.68} \\ \bottomrule
\end{tabular}
\end{wraptable}

Finally, we evaluate how well a torsional Boltzmann generator trained with MMFF \citep{halgren1996merck} energies can sample the corresponding Boltzmann density over torsion angles. We train and test on GEOM-DRUGS molecules with 3--7 rotatable bonds and use the local structures of the first ground-truth conformers. For the baselines, we implement annealed importance samplers (AIS) \citep{neal2001annealed} with Metropolis-Hastings steps over the torsional space and tune the variance of the transition kernels.

Table \ref{tab:boltzmann} shows the quality of the samplers in terms of the \emph{effective sample size} (ESS) given by the weights of 32 samples for each test molecule, which measures the $\alpha$-divergence (with $\alpha=2$) between the model and Boltzmann distributions \citep{midgley2021bootstrap}. Our method significantly outperforms the AIS baseline, and improves with increased step size despite being trained with only a 5-step resampler. Note that, since these evaluations are done on \emph{unseen} molecules, they are beyond the capabilities of existing Boltzmann generators.

\section{Conclusion} \label{sec:conclusion}

We presented \emph{torsional diffusion}, a method for generating molecular conformers based on a diffusion process restricted to the most flexible degrees of freedom. Torsional diffusion is the first machine learning model to significantly outperform standard cheminformatics methods and is orders of magnitude faster than previous Euclidean diffusion models. Using the exact likelihoods provided by our model, we also train the first system-agnostic Boltzmann generator.

There are several exciting avenues for future work. A natural extension is to relax the rigid local structure assumption by developing an efficient diffusion-based model over the full space of intrinsic coordinates while still incorporating chemical constraints. Moreover, torsional diffusion---or similar ideas---could be applicable to larger molecular systems, for which fast, parsimonious models of structural flexibility could benefit applications such as drug discovery and protein design.

\section*{Acknowledgments}
We pay tribute to Octavian-Eugen Ganea (1987-2022), dear colleague, mentor, and friend without whom this work would have never been possible. 

We thank Hannes St\"ark, Wenxian Shi, Xiang Fu, Felix Faltings, Jason Yim, Adam Fisch, Alex Wu, Jeremy Wohlwend, Peter Mikhael, and Saro Passaro for helpful feedback and discussions. We thank Lagnajit Pattanaik, Minkai Xu, and Simon Axelrod for their advice and support when working with, respectively, GeoMol, GeoDiff and the GEOM dataset. This work was supported by the Machine Learning for Pharmaceutical Discovery and Synthesis (MLPDS) consortium, the Abdul Latif Jameel Clinic for Machine Learning in Health, the DTRA Discovery of Medical Countermeasures Against New and Emerging (DOMANE) threats program, the DARPA Accelerated Molecular Discovery program and the Sanofi Computational Antibody Design grant. We acknowledge support from the Department of Energy Computational Science Graduate Fellowship (BJ), the Robert Shillman Fellowship (GC), and the NSF Graduate Research Fellowship (JC).

\newpage

\bibliography{references}
\bibliographystyle{plainnat}

%%%%%%%%%%%%%%%%%%%%%%%%%%%%%%%%%%%%%%%%%%%%%%%%%%%%%%%%%%%%

\newpage

\appendix

\section{Definitions}\label{app:def}

Consider a molecular graph $G = (\mathcal{V}, \mathcal{E})$ and its space of possible conformers $\mathcal{C}_G$. A conformer is an assignment $\mathcal{V} \mapsto \mathbb{R}^3$ of each atom to a point in 3D-space, defined up to global rototranslation. For notational convenience, we suppose there is an ordering of nodes such that we can regard a mapping as a vector in $\mathbb{R}^{3n}$ where $n=|\mathcal{V}|$. Then a conformer $C \in \mathcal{C}_G$ is a set of $SE(3)$-equivalent vectors in $\mathbb{R}^{3n}$---that is, $\mathcal{C}_G \cong \mathbb{R}^{3n}/SE(3)$. This defines the space of conformers in terms of \emph{extrinsic} (or Cartesian) coordinates.

An \emph{intrinsic} (or internal) coordinate is a function over $\mathcal{C}_G$---i.e., it is an $SE(3)$-invariant function over $\mathbb{R}^{3n}$. There are four types of such coordinates typically considered:

\textbf{Bond lengths}. For $(a, b) \in \mathcal{E}$, the bond length $l_{ab} \in [0, \infty)$ is defined as $|x_{a} - x_b|$.

\textbf{Bond angles}. For $a, b, c \in \mathcal{V}$ such that $a, c \in \mathcal{N}(b)$, the bond angle $\alpha_{abc} \in [0, \pi]$ is defined by
\begin{equation}
    \cos \alpha_{abc} := \frac{(x_c - x_b)\cdot(x_a - x_b)}{|x_c - x_b||x_a - x_b|}
\end{equation}

\textbf{Chirality}. For $a \in \mathcal{V}$ with 4 neighbors $b, c, d, e \in \mathcal{N}(a)$, the chirality ${z}_{abcd}\in \{-1, 1\}$ is defined as
\begin{align}
    {z}_{abcde} := \sign\det\begin{pmatrix} 1 & 1 & 1 & 1 \\ x_b - x_a & x_c -x_a & x_d-x_a & x_e - x_a\end{pmatrix}
\end{align}
Similar quantities are defined for atoms with other numbers of neighbors. Chirality is often considered part of the specification of the molecule, rather than the conformer. See Appendix~\ref{app:discuss:isomerism} for additional discussion on this point.

\textbf{Torsion angles}. For $(b, c) \in \mathcal{E}$, with a choice of reference neighbors $a\in\mathcal{N}(b)\setminus\{c\}, d\in\mathcal{N}(c)\setminus\{b\}$, the torsion angle $\tau_{abcd} \in [0, 2\pi)$ is defined as the dihedral angle between planes $abc$ and $bcd$:
\begin{equation}\label{eq:torsion_def}
\begin{aligned}
    \cos \tau_{abcd} &= \frac{\mathbf{n}_{abc} \cdot \mathbf{n}_{bcd}}{|\mathbf{n}_{abc}||\mathbf{n}_{bcd}|}\\
    \sin \tau_{abcd} &= \frac{\mathbf{u}_{bc} \cdot (\mathbf{n}_{abc} \times \mathbf{n}_{bcd})}{|\mathbf{u}_{bc}||\mathbf{n}_{abc}||\mathbf{n}_{bcd}|}
\end{aligned}
\end{equation}
    
where $\mathbf{u}_{ab} = x_b - x_a$ and $\mathbf{n}_{abc}$ is the normal vector $\mathbf{u}_{ab}\times\mathbf{u}_{bc}$. Note that $\tau_{abcd} = \tau_{dcba}$---i.e., the dihedral angle is the same for four consecutively bonded atoms regardless of the direction in which they are considered.

A \textbf{complete set of intrinsic coordinates} of the molecule is a set of such functions $(f_1, f_2, \ldots)$ such that $F(C) = (f_1(C), f_2(C), \ldots)$ is a bijection. In other words, they fully specify a unique element of $\mathcal{C}_G$ without overparameterizing the space. In general there exist many possible such sets for a given molecular graph. We will not discuss further how to find such sets, as our work focuses on manipulating molecules in a way that holds fixed all $l, \alpha, z$ and only modifies (a subset of) torsion angles $\tau$.

As presently stated, the \textbf{torsion angle about a bond} $(b, c)\in\mathcal{E}$ is ill-defined, as it could be any $\tau_{abcd}$ with $a\in\mathcal{N}(b)\setminus\{c\}, d\in\mathcal{N}(c)\setminus\{b\}$. However, any complete set of intrinsic coordinates needs to only have at most one such $\tau_{abcd}$ for each bond $(b, c)$ \citep{ganea2021geomol}. Thus, we often refer to \emph{the} torsion angle about a bond $(b_i, c_i)$ as $\tau_i$ when reference neighbors $a_i, b_i$ are not explicitly stated.

\section{Propositions} \label{app:propositions}
\subsection{Torsion update}

Given a freely rotatable bond $(b_i, c_i)$, by definition removing $(b_i, c_i)$ creates two connected components $\mathcal{V}(b_i), \mathcal{V}(c_i)$. Then, consider torsion angle $\tau_j$ at a different bond $(b_j, c_j)$ with neighbor choices $a_j \in \mathcal{N}(b_j), d_j \in \mathcal{N}(c_j), a_j \neq c_j, d_j \neq b_j$. Without loss of generality, there are two cases
\begin{itemize}
    \item Case 1: $a_j, b_j, c_j, d_j \in \mathcal{V}(b_i)$
    \item Case 2: $d_j \in \mathcal{V}(c_i)$ and $a_j, b_j, c_j \in \mathcal{V}(b_i)$
\end{itemize} 
Note that in Case 2, $c_j = b_i$ and $d_j = c_i$ must hold because there is only one edge between $\mathcal{V}(b_i), \mathcal{V}(c_i)$. With these preliminaries we now restate the proposition:

\begin{proposition} 
Let $(b_i, c_i)$ be a rotatable bond, let $\mathbf{x}_{\mathcal{V}(b_i)}$ be the positions of atoms on the $b_i$ side of the molecule, and let $R(\boldsymbol{\theta}, x_{c_i}) \in SE(3)$ be the rotation by Euler vector $\boldsymbol{\theta}$ about $x_{c_i}$. Then for $C, C' \in \mathcal{C}_G$, if $\tau_i$ is any definition of the torsion angle around bond $(b_i, c_i)$,

\begin{equation} 
    \begin{aligned}
        \tau_i(C') &= \tau_i(C) + \theta\\
        \tau_j(C') &= \tau_j(C) \quad \forall j\neq i
    \end{aligned}
    \qquad \text{if} \qquad
    \exists \mathbf{x} \in C, \mathbf{x'} \in C'\ldotp \quad
    \begin{aligned}
    \mathbf{x}'_{\mathcal{V}(b_i)} &=  \mathbf{x}_{\mathcal{V}(b_i)} \\
    \mathbf{x}'_{\mathcal{V}(c_i)} &=  R\left(\theta \, \mathbf{\hat r}_{b_ic_i}, x_{c_i} \right)\mathbf{x}_{\mathcal{V}(c_i)}
    \end{aligned}
\end{equation}
where $\mathbf{\hat r}_{b_ic_i} = (x_{c_i} - x_{b_i})/||x_{c_i}-x_{b_i}||$.

\end{proposition}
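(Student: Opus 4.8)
The plan is to prove the two claims separately: (i) the torsion angle $\tau_i$ about the rotated bond increases by $\theta$, and (ii) every other torsion angle $\tau_j$ with $j \neq i$ is unchanged. Throughout I would work directly from the coordinate definition of the dihedral angle in Appendix~\ref{app:def}, namely $\tau_{abcd}$ expressed via the normal vectors $\mathbf{n}_{abc} = \mathbf{u}_{ab}\times\mathbf{u}_{bc}$ and $\mathbf{n}_{bcd} = \mathbf{u}_{bc}\times\mathbf{u}_{cd}$, and exploit the fact that $\tau$ depends only on the relative positions of the four atoms (so the common translation implicit in centering, and the global $SE(3)$ ambiguity, are irrelevant).

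For claim (i), fix any reference neighbors $a_i \in \mathcal{N}(b_i)\setminus\{c_i\}$ and $d_i \in \mathcal{N}(c_i)\setminus\{b_i\}$. By construction $a_i, b_i \in \mathcal{V}(b_i)$ so their positions are unchanged, while $c_i, d_i \in \mathcal{V}(c_i)$ so their positions are rotated by $R(\theta\,\mathbf{\hat r}_{b_ic_i}, x_{c_i})$. The key observation is that this rotation is about the axis through $x_{c_i}$ in the direction $\mathbf{\hat r}_{b_ic_i} \parallel \mathbf{u}_{b_ic_i}$. Hence $\mathbf{u}_{b_ic_i} = x_{c_i} - x_{b_i}$ is fixed (the rotation fixes $x_{c_i}$ and acts on $x_{b_i}$ trivially since $b_i$ is on the $b_i$ side); $\mathbf{n}_{a_ib_ic_i} = \mathbf{u}_{a_ib_i}\times\mathbf{u}_{b_ic_i}$ is therefore also fixed. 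On the other side, $\mathbf{u}_{c_id_i} = x_{d_i} - x_{c_i}$ gets rotated about the axis $\mathbf{\hat r}_{b_ic_i}$ by angle $\theta$, and thus so does $\mathbf{n}_{b_ic_id_i} = \mathbf{u}_{b_ic_i}\times\mathbf{u}_{c_id_i}$ — importantly this normal lies in the plane perpendicular to $\mathbf{u}_{b_ic_i}$, which is exactly the plane of rotation. Plugging into the $\cos\tau$ and $\sin\tau$ formulas and using that rotating one of the two normals (the one perpendicular to the axis) by $\theta$ in that plane rotates the signed angle between them by $\theta$, one gets $\tau_i(C') = \tau_i(C) + \theta$. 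A small point to check carefully is the sign convention in the $\sin\tau$ formula, i.e. that the rotation sense induced by $R(\theta\,\mathbf{\hat r}_{b_ic_i}, \cdot)$ matches the orientation convention built into $\mathbf{u}_{bc}\cdot(\mathbf{n}_{abc}\times\mathbf{n}_{bcd})$; this is where using the Euler-vector (right-hand-rule) convention consistently matters.

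For claim (ii), I would use the case split already set up in the appendix. In Case 1, all four atoms $a_j, b_j, c_j, d_j$ lie in $\mathcal{V}(b_i)$, so all their positions are unchanged and hence $\tau_j(C') = \tau_j(C)$ trivially. In Case 2, we have $c_j = b_i$, $d_j = c_i$, with $a_j, b_j, c_j = b_i \in \mathcal{V}(b_i)$ unchanged and only $d_j = c_i$ rotated. But $c_i$ is the fixed point of $R(\theta\,\mathbf{\hat r}_{b_ic_i}, x_{c_i})$, so its position is also unchanged. (If $d_j$ were a neighbor of $c_i$ other than along the $(b_i,c_i)$ bond the argument would be different, but the appendix notes that because $(b_i,c_i)$ is the unique edge between the two components, $d_j = c_i$ necessarily.) Hence again all relevant positions are fixed and $\tau_j(C') = \tau_j(C)$. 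Finally, since a torsion update for the full vector $\Delta\boldsymbol{\tau}$ is obtained by composing single-bond updates in any order and each leaves the other torsions invariant, the composition is well-defined and order-independent.

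The main obstacle I anticipate is not conceptual but bookkeeping: getting the orientation/sign in claim (i) exactly right, i.e. verifying that the signed dihedral changes by $+\theta$ and not $-\theta$, since this depends on a consistent choice of which side is "$b_i$" versus "$c_i$", the direction of $\mathbf{\hat r}_{b_ic_i}$, and the handedness baked into the $\sin\tau_{abcd}$ formula together with the identity $\tau_{abcd} = \tau_{dcba}$. Everything else reduces to the elementary observation that a rotation about an axis fixes vectors parallel to that axis (here $\mathbf{u}_{b_ic_i}$ and hence $\mathbf{n}_{a_ib_ic_i}$) and rotates the perpendicular component (here $\mathbf{n}_{b_ic_id_i}$) rigidly in the perpendicular plane by the rotation angle.
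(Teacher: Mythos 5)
Your proposal is correct and follows essentially the same route as the paper: claim (i) is proven from the $\cos\tau/\sin\tau$ definition by noting that $\mathbf{u}_{b_ic_i}$ and hence $\mathbf{n}_{a_ib_ic_i}$ are fixed while $\mathbf{n}_{b_ic_id_i}$ rotates rigidly by $\theta$ about the bond axis (the paper carries this out explicitly via the Rodrigues formula and angle-addition identities), and claim (ii) uses the identical two-case analysis with $c_j=b_i$, $d_j=c_i$ and the fixed rotation center. The sign/orientation check you defer as bookkeeping is precisely what the paper's explicit computation of $\cos\tau_i'$ and $\sin\tau_i'$ settles, so no substantive gap remains.
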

\begin{small}
\begin{proof}
First we show $\tau_i(C') = \tau_i(C)+\theta$, for which it suffices to show $\tau_i(\mathbf{x}')=\tau_i(\mathbf{x})+\theta$. Because $a_i, b_i \in \mathcal{V}(b_i)$, $x_{a_i}' = x_{a_i}$ and $x_{b_i}' = x_{b_i}$. Since the rotation of $\mathbf{x}_{\mathcal{V}(c_i)}$ is centered at $x_{c_i}$, we have $x_{c_i}' = x_{c_i}$ as well. Now we consider $d_i$ and $\mathbf{u}'_{cd} = x'_{d_i} - x_{c_i}$. By the Rodrigues rotation formula,
\begin{equation}
\begin{aligned}
    \mathbf{u}'_{cd} &= \mathbf{u}_{cd}\cos\theta + \frac{\mathbf{n}_{bcd}}{|\mathbf{u}_{bc}|}\sin\theta + \frac{\mathbf{u}_{bc}}{|\mathbf{u}_{bc}|}\left(\frac{\mathbf{u}_{bc}}{|\mathbf{u}_{bc}|}\cdot \mathbf{u}_{cd} \right)(1-\cos\theta)
\end{aligned}
\end{equation}
Then we have
\begin{equation}
\begin{aligned}
    \mathbf{n}'_{bcd} = \mathbf{u}_{bc} \times \mathbf{u}'_{cd} &=\mathbf{n}_{bcd}\cos\theta - \left(\mathbf{n}_{bcd} \times \frac{\mathbf{u}_{bc}}{|\mathbf{u}_{bc}|}\right)\sin\theta
\end{aligned}\end{equation}
To obtain $|\mathbf{n}'_{bcd}|$, note that since $\mathbf{n}_{bcd}\perp \mathbf{u}_{bc}$, 
\begin{equation}
    \bigg|\mathbf{n}_{bcd} \times \frac{\mathbf{u}_{bc}}{|\mathbf{u}_{bc}|}\bigg| = |\mathbf{n}_{bcd}|
\end{equation}
which gives $|\mathbf{n}'_{bcd}| = |\mathbf{n}_{bcd}|$.
Thus,
\begin{equation}
\begin{aligned}    
    \cos\tau'_i &= \frac{\mathbf{n}_{abc} \cdot \mathbf{n}'_{bcd}}{|\mathbf{n}_{abc}||\mathbf{n}_{bcd}|} = \frac{\mathbf{n}_{abc} \cdot \mathbf{n}_{bcd}}{|\mathbf{n}_{abc}||\mathbf{n}_{bcd}|} \cos\theta - \frac{\mathbf{n}_{abc} \cdot \left(\mathbf{n}_{bcd}\times\mathbf{u}_{bc}\right)}{|\mathbf{n}_{abc}||\mathbf{n}_{bcd}||\mathbf{u}_{bc}|} \sin\theta \\ 
    &= \cos\tau_i\cos\theta - \sin\tau_i\sin\theta = \cos (\tau_i + \theta)
\end{aligned}
\end{equation}
Similarly, 
\begin{equation}
\begin{aligned}
    \sin\tau'_i &= \frac{\mathbf{u}_{bc} \cdot (\mathbf{n}_{abc} \times \mathbf{n}'_{bcd})}{|\mathbf{u}_{bc}||\mathbf{n}_{abc}||\mathbf{n}_{bcd}|} = \frac{\mathbf{u}_{bc} \cdot (\mathbf{n}_{abc} \times \mathbf{n}_{bcd})}{|\mathbf{u}_{bc}||\mathbf{n}_{abc}||\mathbf{n}_{bcd}|}\cos\theta - \frac{\mathbf{u}_{bc} \cdot (\mathbf{n}_{abc} \times \left(\mathbf{n}_{bcd}\times\mathbf{u}_{bc}\right))}{|\mathbf{u}_{bc}|^2|\mathbf{n}_{abc}||\mathbf{n}_{bcd}|}\sin\theta \\
    &= \sin\tau_i\cos\theta + \cos\tau_i\sin\theta = \sin(\tau_i + \theta)
\end{aligned}
\end{equation}
Therefore, $\tau'_i = \tau_i+\theta$

Now we show $\tau'_j = \tau_j$ for all $j\neq i$. Consider any such $j$. For Case 1, $x'_{a_j} = x_{a_j}, x'_{b_j} = x_{b_j}, x'_{c_j} = x_{c_j}, x'_{d_j} = x_{d_j}$ so clearly $\tau'_j = \tau_j$. For Case 2, $x'_{a_j} = x_{a_j}, x'_{b_j} = x_{b_j}, x'_{c_j} = x_{c_j}$ immediately. But because $d_j = c_i$, we also have $x'_{d_j} = x_{d_j}$. Thus,  $\tau'_j = \tau_j$.
\end{proof}
\end{small}
\subsection{Parity equivariance}
\begin{proposition}
    If $p(\boldsymbol{\tau}(C) \mid L(C)) = p(\boldsymbol{\tau}(-C) \mid L(-C))$, then for all diffusion times $t$,
    \begin{equation}
        \nabla_{\boldsymbol{\tau}} \log p_t(\boldsymbol{\tau}(C) \mid L(C)) = -\nabla_{\boldsymbol{\tau}} \log p_t(\boldsymbol{\tau}(-C) \mid L(-C)) 
    \end{equation}
\end{proposition}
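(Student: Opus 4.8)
The plan is to first reduce the statement to a clean symmetry of the densities on $\mathbb{T}^m$, then propagate that symmetry through the forward diffusion, and finally differentiate. The crucial starting observation is that \emph{parity inversion negates every torsion angle}: for any $C \in \mathcal{C}_G$ one has $\boldsymbol{\tau}(-C) = -\boldsymbol{\tau}(C)$ (mod $2\pi$). This follows directly from the definitions in Appendix~\ref{app:def}: in \eqref{eq:torsion_def}, $\cos\tau_{abcd}$ is a ratio of dot products of the vectors $\mathbf{u}, \mathbf{n}$, each of which is negated by $C\mapsto -C$, so $\cos\tau$ is unchanged; whereas $\sin\tau_{abcd}$ is a scalar triple product $\mathbf{u}_{bc}\cdot(\mathbf{n}_{abc}\times\mathbf{n}_{bcd})$ of three vectors each of which changes sign, so $\sin\tau$ changes sign. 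Hence each $\tau_i \mapsto -\tau_i$. (One also notes that $m$ and the set of rotatable bonds are unaffected by parity, so $\mathbb{T}^m$ and the noise schedule $\sigma(t)$ are literally the same for $L(C)$ and $L(-C)$; only the chirality entries of the local structure differ, which is exactly why the hypothesis is needed.) Viewing the hypothesis as an identity of densities on $\mathbb{T}^m$ — letting $C$ range over all conformers with the given local structure — it becomes $p_0(\boldsymbol{\tau}\mid L(C)) = p_0(-\boldsymbol{\tau}\mid L(-C))$ for all $\boldsymbol{\tau}\in\mathbb{T}^m$; i.e. $p_0(\cdot\mid L(-C))$ is the reflection of $p_0(\cdot\mid L(C))$ through the map $\boldsymbol{\tau}\mapsto -\boldsymbol{\tau}$.

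Next I would push this reflection symmetry through the forward diffusion. The perturbation kernel \eqref{eq:torus_score} is a wrapped normal that depends on its two arguments only through $\sum_{\mathbf{d}\in\mathbb{Z}^m}\exp(-\|\boldsymbol{\tau}-\boldsymbol{\tau}'+2\pi\mathbf{d}\|^2/2\sigma^2(t))$, and reindexing $\mathbf{d}\mapsto-\mathbf{d}$ shows it is invariant under simultaneously negating both arguments: $p_{t\mid 0}(-\boldsymbol{\tau}'\mid-\boldsymbol{\tau}) = p_{t\mid 0}(\boldsymbol{\tau}'\mid\boldsymbol{\tau})$. Writing the marginal as $p_t(\boldsymbol{\tau}\mid L) = \int_{\mathbb{T}^m} p_{t\mid 0}(\boldsymbol{\tau}\mid\boldsymbol{\tau}_0)\,p_0(\boldsymbol{\tau}_0\mid L)\,d\boldsymbol{\tau}_0$ and substituting $\boldsymbol{\tau}_0\mapsto-\boldsymbol{\tau}_0$ (the Haar measure on the torus is reflection-invariant) gives
\begin{equation}
p_t(\boldsymbol{\tau}\mid L(C)) = \int p_{t\mid 0}(-\boldsymbol{\tau}\mid\boldsymbol{\tau}_0)\,p_0(-\boldsymbol{\tau}_0\mid L(C))\,d\boldsymbol{\tau}_0 = \int p_{t\mid 0}(-\boldsymbol{\tau}\mid\boldsymbol{\tau}_0)\,p_0(\boldsymbol{\tau}_0\mid L(-C))\,d\boldsymbol{\tau}_0 = p_t(-\boldsymbol{\tau}\mid L(-C)),
\end{equation}
where the middle equality is the reduced hypothesis. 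Thus $p_t(\boldsymbol{\tau}\mid L(C)) = p_t(-\boldsymbol{\tau}\mid L(-C))$ for every diffusion time $t$.

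Finally I would take $\nabla_{\boldsymbol{\tau}}\log$ of this identity. The left side is $\nabla_{\boldsymbol{\tau}}\log p_t(\boldsymbol{\tau}\mid L(C))$; for the right side the chain rule gives that the gradient of $\boldsymbol{\tau}\mapsto\log p_t(-\boldsymbol{\tau}\mid L(-C))$ equals $-(\nabla\log p_t)(-\boldsymbol{\tau}\mid L(-C))$. Evaluating at $\boldsymbol{\tau}=\boldsymbol{\tau}(C)$ and using $-\boldsymbol{\tau}(C)=\boldsymbol{\tau}(-C)$ yields exactly $\nabla_{\boldsymbol{\tau}}\log p_t(\boldsymbol{\tau}(C)\mid L(C)) = -\nabla_{\boldsymbol{\tau}}\log p_t(\boldsymbol{\tau}(-C)\mid L(-C))$, as claimed.

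The main obstacle — really the only non-bookkeeping step — is the very first one: establishing rigorously that $\boldsymbol{\tau}(-C)=-\boldsymbol{\tau}(C)$ independently of the arbitrary choice of reference neighbors, and being careful that $L(-C)$ is a bona fide local structure so that conditioning on it is meaningful (bond lengths and angles are parity-invariant, only chirality flips). Once that is in hand, the kernel's reflection invariance and the chain-rule computation are routine.
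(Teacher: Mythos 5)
Your proposal is correct and takes essentially the same route as the paper's proof: negate the torsions under parity, use the parity invariance of the wrapped-normal perturbation kernel to propagate the symmetry $p_t(\boldsymbol{\tau}\mid L(C)) = p_t(-\boldsymbol{\tau}\mid L(-C))$ to all diffusion times, then apply the chain rule to flip the sign of the score. One small slip in your justification of $\boldsymbol{\tau}(-C)=-\boldsymbol{\tau}(C)$: the normals $\mathbf{n}_{abc}=\mathbf{u}_{ab}\times\mathbf{u}_{bc}$ are pseudovectors and are \emph{unchanged} under parity (only $\mathbf{u}_{bc}$ flips sign in the triple product), though your conclusions---$\cos\tau$ invariant, $\sin\tau$ negated---are unaffected.
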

\begin{small}

\begin{proof}
From Equation~\ref{eq:torsion_def} we see that for any torsion $\tau_i$, we have $\tau_i(-C) = -\tau_i(C)$; therefore $\boldsymbol{\tau}_i(-C) = -\boldsymbol{\tau}_i(C)$, which we denote $\boldsymbol{\tau}_-$. Also denote $\boldsymbol{\tau} := \boldsymbol{\tau}(C), p_t(\boldsymbol{\tau}) := p_t(\boldsymbol{\tau} \mid L(C))$ and $p'_t(\boldsymbol{\tau}_-) := p_t(\boldsymbol{\tau}_- \mid L(-C))$. We claim $p_t(\boldsymbol{\tau}) = p'_t(\boldsymbol{\tau}_-)$ for all $t$. Since the perturbation kernel (\eqref{eq:torus_score}) is parity invariant,
\begin{equation}
\begin{aligned}
    p'_t(\boldsymbol{\tau}_{-}) &= \int_{\mathbb{T}^m} p'_0(\boldsymbol{\tau}'_{-}) p_{t\mid 0}(\boldsymbol{\tau}_{-} \mid \boldsymbol{\tau}'_{-})\; d\boldsymbol{\tau}'_{-} \\
    &= \int_{\mathbb{T}^m} p_0(\boldsymbol{\tau}') p_{t\mid 0}(\boldsymbol{\tau} \mid \boldsymbol{\tau}')\; d\boldsymbol{\tau}'_{-} = p_t(\boldsymbol{\tau})
\end{aligned}
\end{equation}
Next, we have
\begin{equation}
\begin{aligned}
    \nabla_{\boldsymbol{\tau}}\log p'_t(\boldsymbol{\tau}_-) &= \frac{\partial\boldsymbol{\tau}_-}{\partial\boldsymbol{\tau}}\nabla_{\boldsymbol{\tau}^{-}}\log p'_t(\boldsymbol{\tau}_-) \\
    &= -\nabla_{\boldsymbol{\tau}}\log p_t(\boldsymbol{\tau}) 
\end{aligned}
\end{equation}
which concludes the proof.
\end{proof}
\end{small}

\subsection{Likelihood conversion}

\begin{proposition}
 \label{prop:euclidean_app}
Let $\mathbf{x} \in C(\boldsymbol{\tau}, L)$ be a centered conformer in Euclidean space. Then,
\begin{equation}
    p_G(\mathbf{x} \mid L) = \frac{p_G(\boldsymbol{\tau} \mid L)}{ 8 \pi^2 \sqrt{\det g}}
    \quad \mathrm{where} \ \
    g_{\alpha\beta} = 
    \sum_{k=1}^{n} 
    J^{(k)}_{\alpha} \cdot J^{(k)}_{\beta}
\end{equation}
where the indices $\alpha,\beta$ are integers between 1 and $m+3$. For $1 \leq \alpha \leq m$, $J^{(k)}_\alpha$ is defined as
    \begin{align}
    \label{eqn:basisvec_app}
        J^{(k)}_{i} &= \tilde J^{(k)}_{i} - \frac 1 n \sum_{\ell=1}^{n} \tilde J^{(\ell)}_{i}
        \quad
        \mathrm{with} \ \
        \tilde J^{(\ell)}_{i} =
        \begin{cases}
            0 & \ell \in \mathcal{V}(b_i), \\
            \frac{\mathbf{x}_{b_i} - \mathbf{x}_{c_i}} {||\mathbf{x}_{b_i} - \mathbf{x}_{c_i}||}
            \times
            \left( \mathbf{x}_\ell - \mathbf{x}_{c_i} \right),
            & \ell \in \mathcal{V}(c_i),
        \end{cases}
    \end{align}
    and for $\alpha \in \{m+1, m+2, m+3\}$ as
    \begin{align}
    \label{eq:omegajacobian_app}
        J^{(k)}_{m+1} &= \mathbf{x}_k \times \hat{x},
        \qquad
        J^{(k)}_{m+2} = \mathbf{x}_k \times \hat{y},
        \qquad
        J^{(k)}_{m+3} = \mathbf{x}_k \times \hat{z},
        \qquad
    \end{align}
    where $(b_i, c_i)$ is the freely rotatable bond for torsion angle $i$, $\mathcal{V}(b_i)$ is the set of all nodes on the same side of the bond as $b_i$, and $\hat x, \hat y, \hat z$ are the unit vectors in the respective directions.
\end{proposition}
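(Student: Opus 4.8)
The plan is to establish Proposition~\ref{prop:euclidean_app} as a change-of-variables formula for the pushforward of a probability density under the map $\Phi : (\boldsymbol{\tau}, \boldsymbol{\omega}, \text{translation}) \mapsto \mathbf{x} \in \mathbb{R}^{3n}$ that builds a Euclidean conformer from torsion angles $\boldsymbol{\tau} \in \mathbb{T}^m$, a global rotation parameter $\boldsymbol{\omega} \in \mathbb{R}^3$ (Euler vector near the identity), and a global translation, holding the local structure $L$ fixed. Since $p_G(\mathbf{x} \mid L)$ is really a density on the $(m+3)$-dimensional orbit manifold $\mathcal{C}_G(\,\cdot\,, L)$ embedded in $\mathbb{R}^{3n}$ (after quotienting translations, or restricting to centered conformers), the density transforms by the reciprocal of the square root of the Gram determinant of the Jacobian of $\Phi$ — this is the standard area/coarea formula for a parametrization of a submanifold. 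The $8\pi^2$ prefactor is the volume of $SO(3)$ (namely $\int_{SO(3)} d\mathbf{R} = 8\pi^2$ with the Haar-type measure induced by the Euler-vector parametrization near the identity), which accounts for integrating out the rotational degree of freedom so that $p_G(\boldsymbol{\tau}\mid L)$ on $\mathbb{T}^m$ is normalized while $p_G(\mathbf{x}\mid L)$ lives on the $SE(3)$-orbit.

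Concretely, I would proceed in the following steps. First, set up the parametrization: fix a reference centered conformer with local structure $L$ and torsions $\boldsymbol{\tau}$, and define $\mathbf{x}(\boldsymbol{\tau}, \boldsymbol{\omega})$ by first applying the $m$ sequential torsion rotations (Proposition~\ref{prop:torsion} guarantees these act cleanly and commute in their effect on the torsion coordinates) and then a global rotation $R(\boldsymbol{\omega})$; translations are handled separately and factor out (hence the footnote). Second, compute the columns of the Jacobian $\partial \mathbf{x}/\partial \tau_i$ and $\partial \mathbf{x}/\partial \omega_j$ evaluated at $\boldsymbol{\omega} = 0$: differentiating a rotation about axis $\hat{\mathbf{r}}_{b_ic_i}$ through center $x_{c_i}$ acting only on $\mathcal{V}(c_i)$ gives, for atom $\ell$, the infinitesimal generator $\hat{\mathbf{r}}_{b_ic_i} \times (\mathbf{x}_\ell - x_{c_i})$ for $\ell \in \mathcal{V}(c_i)$ and $0$ otherwise — this is $\tilde J^{(\ell)}_i$. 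The centering correction (subtracting the mean $\frac1n\sum_\ell \tilde J^{(\ell)}_i$) appears because we restrict to the centered-conformer slice, i.e.\ we project each tangent vector onto the subspace orthogonal to global translations. Similarly, $\partial \mathbf{x}_k/\partial \omega_j = \mathbf{x}_k \times \hat{e}_j$ gives \eqref{eq:omegajacobian_app}. Third, assemble the Gram matrix $g_{\alpha\beta} = \sum_k J^{(k)}_\alpha \cdot J^{(k)}_\beta$ over all $m+3$ parameters, and invoke the change-of-variables/area formula: $p_G(\mathbf{x}\mid L)\,\sqrt{\det g}\;d\boldsymbol{\tau}\,d\boldsymbol{\omega} = p_G(\boldsymbol{\tau},\boldsymbol{\omega}\mid L)\,d\boldsymbol{\tau}\,d\boldsymbol{\omega}$. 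Fourth, since the physical density is invariant to the global rotation, $p_G(\boldsymbol{\tau},\boldsymbol{\omega}\mid L)$ is uniform in $\boldsymbol{\omega}$ over $SO(3)$, so integrating out $\boldsymbol{\omega}$ divides by $\mathrm{vol}(SO(3)) = 8\pi^2$, yielding $p_G(\mathbf{x}\mid L) = p_G(\boldsymbol{\tau}\mid L)/(8\pi^2\sqrt{\det g})$.

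A few technical points need care. One should check that the torsion-composition map is well-defined independent of the order in which the $\Delta\tau_i$ are applied and that its differential at the reference point has exactly the claimed columns — here Proposition~\ref{prop:torsion} does the heavy lifting, since it tells us each torsion rotation changes only its own torsion coordinate, so the $\tau_i$ really are valid coordinates on the fiber. One must also verify that the parametrization $\Phi$ is a local diffeomorphism onto the orbit manifold for generic conformers (so $\det g > 0$), and that the induced measure on $SO(3)$ from the Euler-vector chart integrates to $8\pi^2$ — this is a standard fact but worth stating: near the identity the Euler-vector volume element agrees with the bi-invariant Haar measure, whose total mass is $8\pi^2$ (equivalently, $\mathrm{vol}(\mathbb{S}^3)\cdot\ldots$, or directly $\int_0^\pi \int (\text{angle }\theta\text{ weight } 2(1-\cos\theta)/\theta^2\cdot\theta^2)\,d\theta\,d\Omega = 8\pi^2$).

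The main obstacle I anticipate is making the "density on a $(m+3)$-dimensional submanifold of $\mathbb{R}^{3n}$" statement rigorous and matching conventions: one has to be precise about what measure $p_G(\mathbf{x}\mid L)$ is a density with respect to (the Hausdorff/Riemannian volume measure on the orbit $\mathcal{C}_G(\cdot, L)$ inherited from $\mathbb{R}^{3n}$, modulo translations), and then the factor $\sqrt{\det g}$ is exactly the Jacobian relating this volume measure to $d\boldsymbol{\tau}\,d\boldsymbol{\omega}$ via the area formula. Everything else — the explicit Jacobian columns, the $8\pi^2$ — is routine computation and a standard volume-of-$SO(3)$ fact once the measure-theoretic bookkeeping is pinned down.
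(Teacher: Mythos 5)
Your proposal matches the paper's own proof in both structure and substance: the paper likewise parametrizes the $(m+3)$-dimensional manifold of centered conformers by $(\tau_1,\ldots,\tau_m,\omega_x,\omega_y,\omega_z)$, computes the covariant basis vectors $J_\alpha=\partial\mathbf{x}/\partial q^\alpha$ via the same cross-product generators with the center-of-mass correction, forms the metric $g_{\alpha\beta}=J_\alpha\cdot J_\beta$, and obtains the conversion factor $\int\sqrt{\det g}\,d^3\omega = 8\pi^2\sqrt{\det g}$ by noting the metric is independent of $\omega$ and the volume of $SO(3)$ is $8\pi^2$. The technical points you flag (well-definedness of the torsion coordinates via Proposition~\ref{prop:torsion}, the invariant volume element, and the $SO(3)$ volume) are exactly the ingredients the paper invokes, so no gap.
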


\begin{proof}\begin{small}
Let $M$ be $(m+3)$-dimensional manifold embedded in $3n$-dimensional Euclidean space formed by the set of all centered conformers with fixed local structures but arbitrary torsion angles and orientation. A natural set of coordinates for $M$ is $q^\alpha = \{\tau_1, \tau_2, \ldots, \tau_m, \omega_x, \omega_y, \omega_z\}$, where $\tau_i$ is the torsion angle at bond $i$ and $\omega_x, \omega_y, \omega_z$ define the global rotation about the center of mass:
\begin{equation}
    \mathbf{x}_k = \tilde {\mathbf{x}}_k  - \frac 1 n \sum_{\ell=1}^n {\tilde{\mathbf{x}}}_\ell
    \quad
    \mathrm{where}
    \quad
    {\tilde{\mathbf{x}}}_\ell = e^{\Lambda(\omega)} \mathbf{x}^\prime_k,
    \quad
    \Lambda(\omega) =
    \begin{pmatrix}
        0 & -\omega_z & \omega_y \\
        \omega_z & 0 & -\omega_x \\
        -\omega_y & \omega_x & 0
    \end{pmatrix}.
\end{equation}
Here ${\mathbf{x}}_k^\prime$ is the position of atom $k$ as determined by the torsion angles, without centering or global rotations, and $\omega_x, \omega_y, \omega_z$ are rotation about the $x$, $y$, and $z$ axis respectively.

Consider the set of covariant basis vectors
\begin{equation}
    \mathbf{J}_\alpha = \frac{\partial \mathbf{x}}{\partial q^\alpha}.
\end{equation}
and corresponding the covariant components of the metric tensor,
\begin{equation}
    g_{\alpha \beta} = \mathbf{J}_\alpha \cdot \mathbf{J}_\beta = \frac{\partial \mathbf{x}}{\partial q^\alpha} \cdot \frac{\partial \mathbf{x}}{\partial q^\beta}.
\end{equation}
The conversion factor between torsional likelihood and Euclidean likelihood is given by
\begin{equation}
    \int \sqrt{\det \mathbf{g}} \, d^{3} \omega,
    \label{eq:conversionfactor_app}
\end{equation}
where $\sqrt{\det \mathbf{g}} \, d^{m+3}q$ is the invariant volume element on $M$ \citep{carroll2019spacetime}, and the integration over $\omega$ marginalizes over the uniform distribution over global rotations. The calculation of Eq.~\ref{eq:conversionfactor_app} proceeds as follows.

Let the position of the $k$'th atom be $x_k$, and let the three corresponding components of $\mathbf{J}_\alpha$ be $J^{(k)}_\alpha$. For $1 \leq i \leq m$, $J^{(k)}_i$is given by
\begin{equation}
   J^{(k)}_i = \frac{\partial}{\partial \tau_i}
   \left(
        \tilde {\mathbf{x}}_k
        - \frac 1 n \sum_{\ell=1}^n {\tilde{\mathbf{x}}}_\ell
   \right)
   = \tilde J^{(k)}_{i} - \frac 1 n \sum_{\ell=1}^{n} \tilde J^{(\ell)}_{i}
\end{equation}
where $\tilde J^{(k)}_{i} := \partial \tilde {\mathbf{x}}_{k}/\partial \tau_i$ is the displacement of atom $k$ upon an infinitesmal change in the torsion angle $\tau_i$, without considering the change in the center of mass. Clearly $\tilde J^{(b_i)}_{i} = \tilde J^{(c_i)}_{i} = 0$ because neither $b_i$ nor $c_i$ itself is displaced; furthermore, all atoms on the $b$ side of torsioning bond are not displaced, so $J^{(k)}_{i} = 0$ for all $k \in \mathcal{N}(b_i)$. The remaining atoms, in $\mathcal{N} (c_i)$, are rotated about the axis of the $(b_i,c_i)$ bond. The displacement per infinitesimal $\partial \tau_i$ is given by the cross product of the unit normal along the rotation axis, $(\tilde x_{c_i} - \tilde x_{b_i}) / {||\tilde x_{c_i} - \tilde x_{b_i}||}$, with the displacement from rotation axis, $\tilde x_k - \tilde x_{b_i}$. This cross product yields ${J}^{(k)}_\alpha$ in Eq.~\ref{eqn:basisvec_app}, where the tildes are dropped as relative positions do not depend on center of mass. For $\alpha \in \{m+1, m+2, m+3\}$, a similar consideration of the cross product with the rotation axis yields Eq.~\ref{eq:omegajacobian_app}. Finally, since none of the components of the metric tensor depend explicitly on $\omega$, the integration over $\omega$ in Eq.~\ref{eq:conversionfactor_app} is trivial and yields the volume over $SO(3)$ of $8 \pi^2$ \citep{chirikjian2011stochastic}, proving the proposition.
\end{small}\end{proof}

\section{Training and inference procedures} \label{app:summary_procedures}

Algorithms \ref{alg:training} and \ref{alg:inference} summarize, respectively, the training and inference procedures used for torsional diffusion. In practice, during training, we limit $K_G$ to 30 i.e. we only consider the first 30 conformers found by CREST (typically those with the largest Boltzmann weight). Moreover, molecules are batched and an Adam optimizer with a learning rate scheduler is used for optimization. For inference, to fairly compare with other methods from the literature, we follow \cite{ganea2021geomol} and set $K$ to be twice the number of conformers returned by CREST.

\begin{algorithm}[h]
\caption{Training procedure}\label{alg:training}
\KwIn{molecules $[G_0, ..., G_N]$ each with true conformers $[C_{G,1}, ... C_{G,K_G}]$, learning rate $\alpha$}
\KwOut{trained score model $\mathbf{s}_\theta$}
conformer matching process for each $G$ to get $[\hat{C}_{G,1}, ... \hat{C}_{G,K_G}]$\;
\For{epoch $\leftarrow 1$ \KwTo $\text{epoch}_{\max}$}{
    \For{$G$ \textbf{in} $[G_0, ..., G_N]$}{
        sample $t \in [0, 1]$ and $\hat{C} \in [\hat{C}_{G,1}, ... \hat{C}_{G,K_G}]$\;
        sample $\Delta \boldsymbol{\tau}$ from wrapped normal $p_{t\mid 0}(\cdot \mid \mathbf{0})$ with $\sigma = \sigma_{\min} ^{1-t} \, \sigma_{\max} ^t$\; % \sim N(\mathbf{0}, \sigma \mathbf{I})$ 
        apply $\Delta \boldsymbol{\tau}$ to $\hat{C}$\;
        predict $\delta \boldsymbol{\tau} = \mathbf{s}_{\theta, G}(\hat{C}, t)$\;
        update $\theta \gets \theta - \alpha \nabla_{\theta} \lVert \delta \boldsymbol{\tau} - \nabla_{\Delta\boldsymbol{\tau}} p_{t\mid 0}(\Delta\boldsymbol{\tau} \mid \mathbf{0})  \rVert^2$\;
    }
}
\end{algorithm}

\begin{algorithm}[h]
\caption{Inference procedure}\label{alg:inference}
\KwIn{molecular graph $G$, number conformers $K$, number steps $N$}
\KwOut{predicted conformers  $[C_1, ... C_K]$}
generate local structures by obtaining conformers $[C_1, ... C_K]$ from RDKit\;
\For{$C$ \textbf{in} $[C_1, ... C_K]$}{
    sample $\Delta \boldsymbol{\tau} \sim U[0, 2\pi]^m$ and apply to $C$ to randomize torsion angles\;
    \For{n $\leftarrow N$ \KwTo $1$}{
        let $t = n/N, \; g(t) = \sigma_{\min} ^{1-t} \, \sigma_{\max} ^t\sqrt{2\ln(\sigma_{\max}/\sigma_{\min})}$\;
        predict $\delta \boldsymbol{\tau} = \mathbf{s}_{\theta, G}(\hat{C}, t)$\;
        draw $\mathbf{z}$ from wrapped normal with $\sigma^2 = 1/N$\;
        set $\Delta \boldsymbol{\tau} = (g^2(t)/N)\;\delta\boldsymbol{\tau} + g(t)\;\mathbf{z}$\;
        apply $\Delta\boldsymbol{\tau}$ to $C$\;
    }
}
\end{algorithm}

\section{Score network architecture} \label{app:architecture}

\paragraph{Overview} To perform the torsion score prediction under these symmetry constraints we design an architecture formed by three components: an embedding layer, a series of $K$ interaction layers and a pseudotorque layer. The pseudotorque layer produces pseudoscalar torsion scores $\delta\tau := \partial \log p/\partial\tau$ for every rotatable bond.
Following the notation from \cite{thomas2018tensor} and \cite{batzner2021se}, we represent the node representations as $V_{acm}^{(k, l, p)}$ a dictionary with keys the layer $k$, rotation order $l$ and parity $p$ that contains tensors with shapes $[ |\mathcal{V}|, n_l, 2l+1 ]$ corresponding to the indices of the node, channel and representation respectively.  We use the \verb|e3nn| library \citep{e3nn} to implement our architecture.

\begin{figure}[h]
    \centering
    \includegraphics[width=0.8\textwidth]{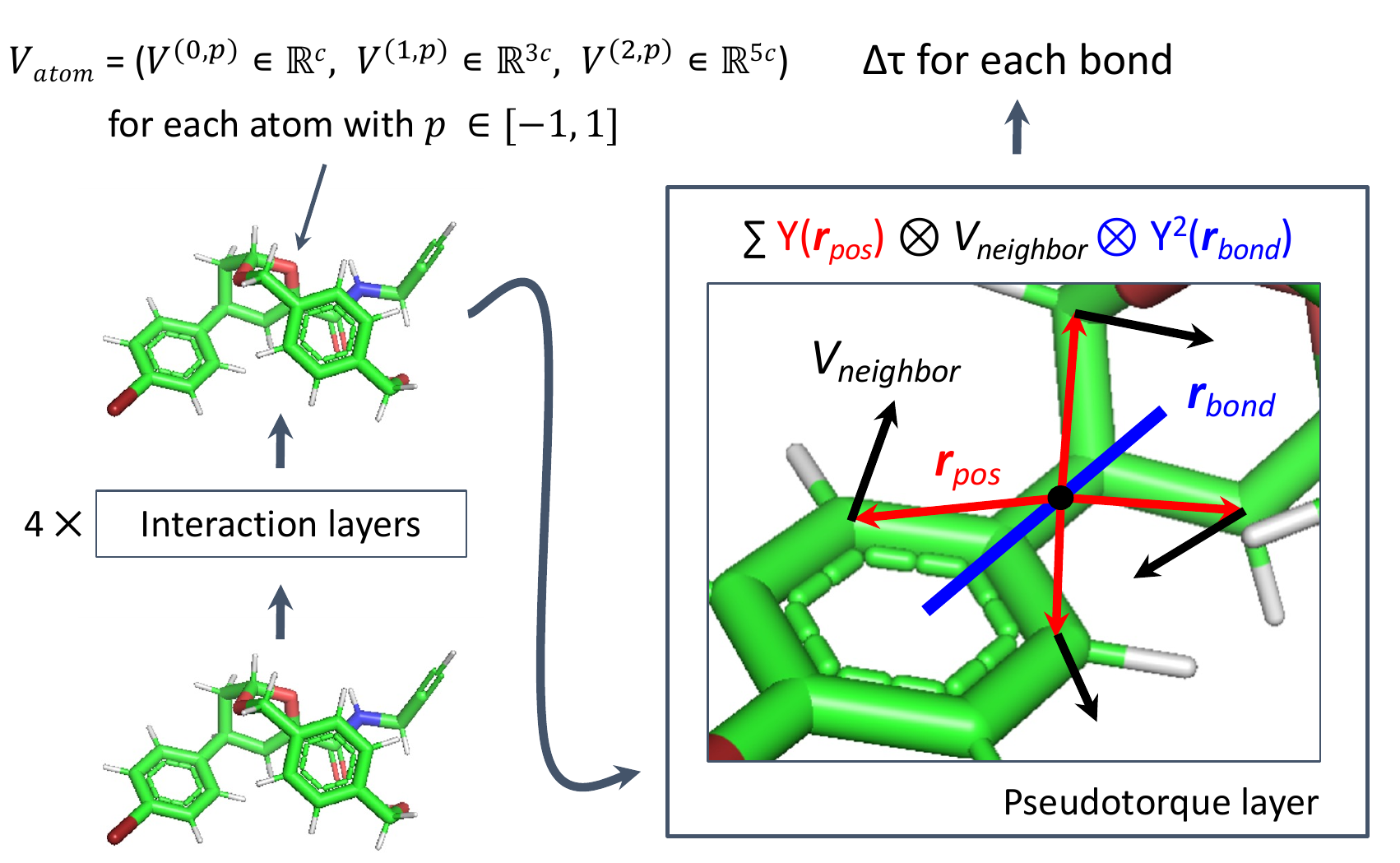}
    \caption{Overview of the architecture and visual intuition of the pseudotorque layer.}
    \label{fig:architecture}
\end{figure}

\paragraph{Embedding layer} In the embedding layer, we build a radius graph $(\mathcal{V}, \mathcal{E}_{r_{\max} })$ around each atom on top of the original molecular graph and generate initial scalar embeddings for nodes $V_a^{(0, 0, 1)}$ and edges $e_{ab}$ combining chemical properties, sinusoidal embeddings of time $\phi(t)$ \citep{vaswani2017attention} and, for the edges, a radial basis function representation of their length $\mu(r_{ab})$ \citep{schutt2017schnet}:
\begin{equation}
\begin{gathered}
\mathcal{E}_{r_{\max} } = \mathcal{E} \sqcup \{ (a,b) \mid r_{ab} < r_{\max} \}  \\
e_{ab} = \Upsilon^{(e)}(fe_{ab} || \mu(r_{ab}) || \phi(t)) \quad \forall (a,b) \in \mathcal{E}_{r_{\max}}  \\
V_a^{(0, 0, 1)} = \Upsilon^{(v)}(f_a || \phi(t))  \quad \forall a \in \mathcal{V}  
\end{gathered}
\end{equation}

where $\Upsilon^{(e)}$ and $\Upsilon^{(v)}$ are learnable two-layers MLPs, $r_{ab}$ is the Euclidean distance between atoms $a$ and $b$, $r_{\max} = 5\text{ \AA}$ is the distance cutoff, $f_a$ are the chemical features of atom $a$, $f_{ab}$ are the chemical features of bond $(a,b)$ if it was part of $\mathcal{E}$ and 0 otherwise.

The node and edge chemical features $f_a$ and $f_{ab}$ are constructed as in \citet{ganea2021geomol}. Briefly, the node features include atom identity, atomic number, aromaticity, degree, hybridization, implicit valence, formal charge, ring membership, and ring size, constituting a 74-dimensional vector for GEOM-DRUGS and 44-dimensional for QM9 (due to fewer atom types). The edge features are a 4 dimensional one-hot encoding of the bond type.

\paragraph{Interaction layers} The interaction layers are based on E(3)NN \citep{e3nn} convolutional layers. At each layer, for every pair of nodes in the graph, we construct messages using tensor products of the current irreducible representation of each node with the spherical harmonic representations of the normalized edge vector. These messages are themselves irreducible representations, which are weighted channel-wise by a scalar function of the current scalar representations of the two nodes and the edge and aggregated with Clebsch-Gordan coefficients.

At every layer $k$, for every node $a$, rotation order $l_o$, and output channel $c'$:
\begin{small}
\begin{equation}
\begin{gathered}
V_{ac'm_o}^{(k, l_o, p_o)} = \sum_{l_f, l_i, p_i} \sum_{m_f, m_i} C_{(l_i, m_i)(l_f, m_f)}^{(l_o,m_o)} \frac{1}{|\mathcal{N}_a|} \sum_{b \in \mathcal{N}_a} \sum_c \psi_{abc}^{(k, l_o, l_f, l_i, p_i)} \; Y_{m_f}^{(l_f)}(\hat{r}_{ab}) \; V_{bcm_i}^{(k-1,l_i,p_i)} \\
\text{with } \psi_{abc}^{(k,l_o, l_f, l_i, p_i)} = \Psi_c^{(k, l_o, l_f, l_i, p_i)}(e_{ab} || V_a^{(k-1, 0, 1)} || V_b^{(k-1, 0, 1)})
\end{gathered}\end{equation}\end{small}
where the outer sum is over values of $l_f, l_i, p_i$ such that $|l_i-l_f| \le l_o \le l_i+l_f$ and $(-1)^{l_f}p_i = p_o$, $C$ indicates the Clebsch-Gordan coefficients \citep{thomas2018tensor}, $\mathcal{N}_a = \{b \mid (a,b) \in \mathcal{E}_{\max} \}$ the neighborhood of $a$ and $Y$ the spherical harmonics. The rotational order of the nodes representations $l_o$ and $l_i$  and of the spherical harmonics of the edges ($l_f$) are restricted to be at most 2. All the learnable weights are contained in $\Psi$, a dictionary of MLPs that compute per-channel weights based on the edge embeddings and scalar features of the outgoing and incoming node.

\paragraph{Pseudotorque layer} The final part of our architecture is a pseudotorque layer that predicts a pseudoscalar score $\delta\tau$ for each rotatable bond from the per-node outputs of the interaction layers. For every rotatable bond, we construct a tensor-valued filter, centered on the bond, from the tensor product of the spherical harmonics with a $l=2$ representation of the \emph{bond axis}. Since the parity of the $l=2$ spherical harmonic is even, this representation does not require a choice of bond direction. The filter is then used to convolve with the representations of every neighbor on a radius graph, and the products which produce pseudoscalars are passed through odd-function (i.e., with $\tanh$ nonlinearity and no bias) dense layers (not shown in equation \ref{eq:pseudotorque}) to produce a single prediction.

For all rotatable bonds $g = (g_0, g_1) \in \mathcal{E}_{\text{rot}}$ and $b \in \mathcal{V}$, let $r_{gb}$ and $\hat{r}_{gb}$ be the magnitude and direction of the vector connecting the center of bond $g$ and $b$.
\begin{equation}\label{eq:pseudotorque}
\begin{gathered}
\mathcal{E}_\tau = \{ (g,b) \mid g \in \mathcal{E}_r, b \in \mathcal{V}, r_{gb} < r_{\max} \} \quad \quad e_{gb} = \Upsilon^{(\tau)}(\mu(r_{gb}))\\
T_{gbm_o}^{(l_o, p_o)} = \sum_{m_g, m_r, l_r: p_o=(-1)^{l_r}} C_{(2, m_g)(l_r, m_r)}^{(l_o,m_o)} Y_{m_f}^{(2)}(\hat{r}_{g}) \; Y_{m_r}^{(l_r)}(\hat{r}_{gb})\\
\delta\tau_{g} = \sum_{l, p_f, p_i: p_fp_i=-1} \sum_{m_o, m_i} C_{(l, m_f)(l, m_i)}^{(0,0)} \frac{1}{|\mathcal{N}_g|} \sum_{b \in \mathcal{N}_g} \sum_c \gamma_{gcb}^{(l, p_i)} \; T_{gbm_f}^{(l, p_f)} \; V_{bcm_i}^{(K,l,p_i)}  \\
\text{with } \gamma_{gcb}^{(l, p_i)} = \Gamma_c^{(l, p_i)}(e_{gb} || V_b^{(K, 0, 1)} || V_{g_0}^{(K, 0, 1)} + V_{g_1}^{(K, 0, 1)})
\end{gathered}\end{equation}

where $\Upsilon^{(\tau)}$ and $\Gamma$ are MLPs with learnable parameters and $\mathcal{N}_g = \{b \mid (g,b) \in \mathcal{E}_{\tau} \}$.

\section{Conformer matching} \label{app:matching}

The conformer matching procedure, summarised in Algorithm \ref{alg:matching}, proceeds as follows. For a molecule with $K$ conformers, we first generate $K$ random local structure estimates $\hat L$ from RDKit. To match with the ground truth local structures, we compute the cost of matching each true conformer $C$ with each estimate $\hat L$ (i.e. a $K\times K$ cost matrix), where the cost is the best RMSD that can be achieved by modifying the torsions of the RDKit conformer with local structure $\hat L$ to match the ground truth conformer $C$. Note that in practice, we compute an upper bound to this optimal RMSD using the fast von Mises torsion matching procedure proposed by \citet{stark2022equibind}. 

We then find an optimal matching of true conformers $C$ to local structure estimates $\hat{L}$ by solving the linear sum assignment problem over the approximate cost matrix \citep{crouse2016implementing}. Finally, for each matched pair, we find the true optimal $\hat{C}$ by running a differential evolution optimization procedure over the torsion angles \citep{mendez2021geometric}. The complete assignment resulting from the linear sum solution guarantees that there is no distributional shift in the local structures seen during training and inference. 

\begin{algorithm}[h]
\caption{Conformer matching}\label{alg:matching}
\KwIn{true conformers of $G$ $[C_1, ... C_K]$}
\KwOut{approximate conformers for training  $[\hat{C}_1, ... \hat{C}_K]$}
generate local structures $[\hat{L}_1, ... \hat{L}_K]$ with RDKit\;
\For{$(i,j)$ \textbf{in} $[1,K]\times [1,K]$}{
    $C_{\text{temp}}$ = von\_Mises\_matching($C_i$, $\hat{L}_j$)\;
    cost[i,j] = $\rmsd(C_i$, $C_{\text{temp}})$\;
}
assignment = linear\_sum\_assignment(cost)\;
\For{$i\leftarrow 1$ \KwTo $K$}{
    j = assignment[i]\;
    $\hat{C}_i$ = differential\_evolution($C_i$, $\hat{L}_j$, RMSD)\;
}
\end{algorithm}

Table \ref{tab:matching_results} shows the average RMSD between a ground truth conformer $C_i$ and its matched conformer $\hat{C}_i$. The average RMSD of 0.324 \AA\ obtained via conformer matching provides an approximate lower bound on the achievable AMR performance for methods that do not change the local structure and take those from RDKit (further discussion in Appendix \ref{app:discuss:rdkit}).

\begin{table}[h]
 \caption{Average $\rmsd(c_i, \hat{c}_i)$ achieved by different variants of conformer matching. "Original RDKit" refers to the RMSD between a random RDKit conformer and a ground truth conformer without any optimization. In "Von Mises optimization" and "Differential evolution," the torsions of the RDKit conformer are adjusted using the respective procedures, but the pairing of RDKit and ground truth conformers is still random. In "Conformer matching," the cost-minimizing assignment prior to differential evolution provides a 15\% improvement in average RMSD. The results are shown for a random 300-molecule subset of GEOM-DRUGS.}
 \label{tab:matching_results}
 \centering
 \begin{tabular}{lc} \toprule
 Matching method & RMSD (\AA) \\ \midrule
 Original RDKit           & 1.448     \\
 Von Mises optimization   & 0.728     \\
 Differential evolution   & 0.379     \\
 Conformer matching       & 0.324    \\ \bottomrule
 \end{tabular}
 \end{table}

\section{Additional discussion} \label{app:discuss}

\subsection{RDKit local structures} \label{app:discuss:rdkit}

In this section, we provide empirical justification for the claim that cheminformatics methods like RDKit already provide accurate local structures. It is well known in chemistry that bond lengths and angles take on a very narrow range of values due to strong energetic constraints. However, it is not trivial to empirically evaluate the claim due to the difficulty in defining a distance measure between a pair of local structures. In this section, we will employ two sets of observations: marginal error distributions and matched conformer RMSD.

\begin{figure}[h!]
    \centering
    \includegraphics[width=0.49\textwidth]{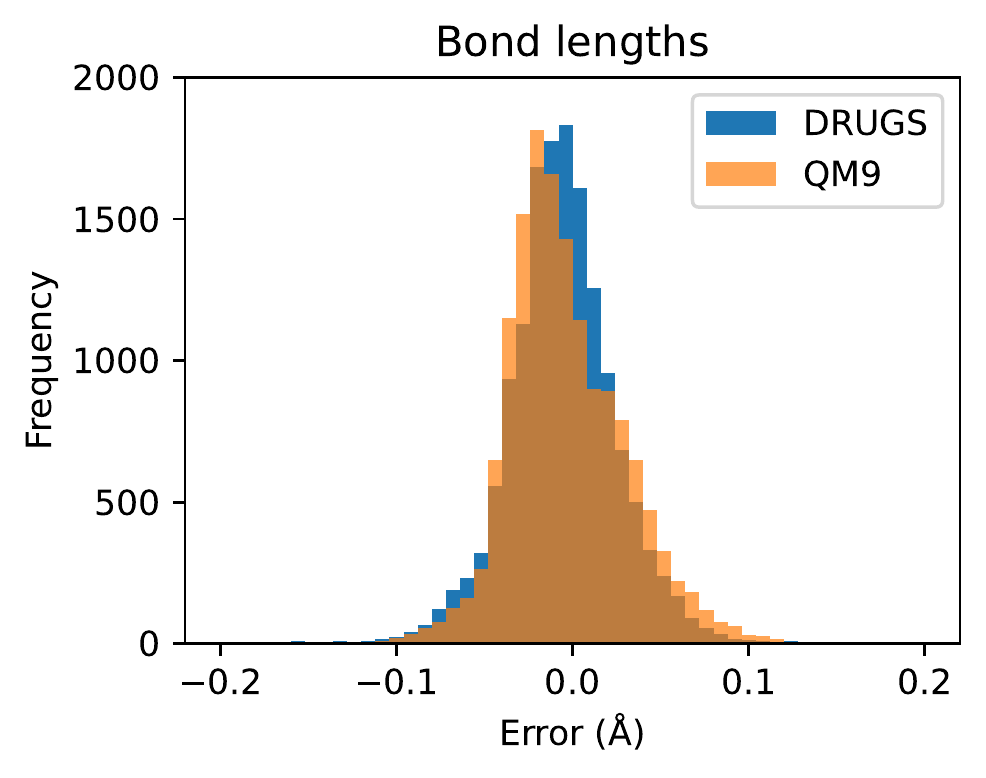}
    \includegraphics[width=0.49\textwidth]{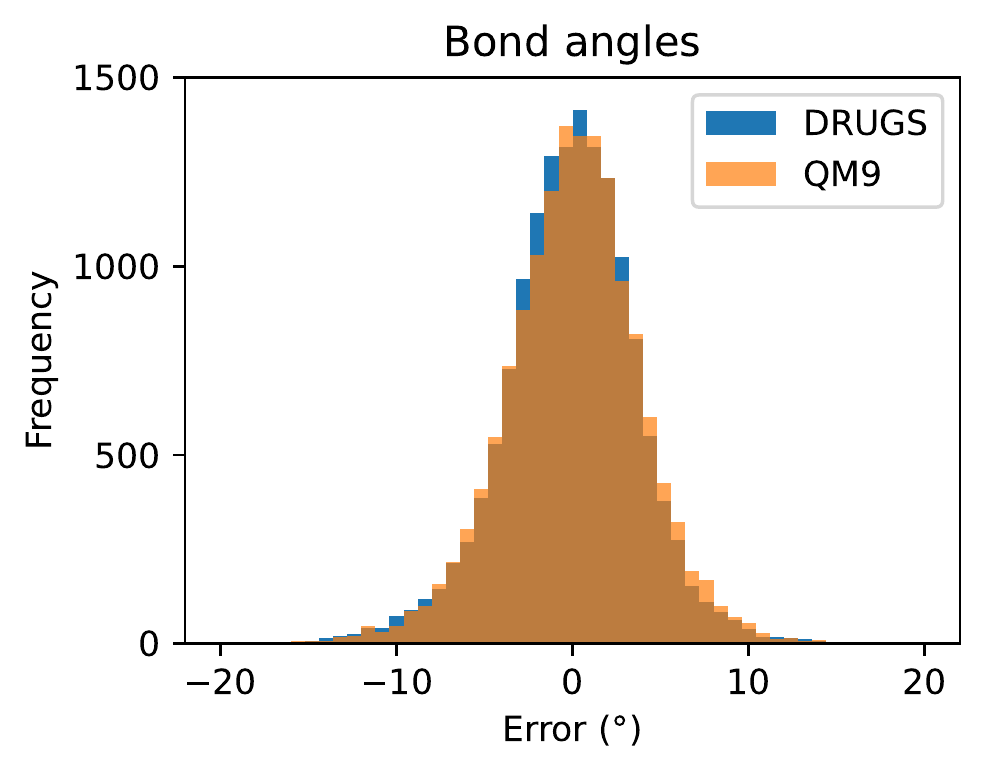}
    \caption{Histogram of the errors in 15000 predicted bond lengths and angles from randomly sampled molecules in GEOM-DRUGS and GEOM-QM9.}
    \label{fig:marginal_local_errors}
\end{figure}

\paragraph{Marginal error distributions} We examine the distribution of errors of the bond lengths and angles in a random RDKit conformer relative to the corresponding lengths and angles in a random CREST conformer (Figure \ref{fig:marginal_local_errors}). The distributions are narrow and uni-modal distributions around zero, with a RMSE of 0.03 \AA\ for bond lengths and 4.1\textdegree\, for bond angles on GEOM-DRUGS. Comparing DRUGS and QM9, the error distribution does not depend on the size of the molecule. Although it is difficult to determine how these variations will compound or compensate for each other in the global conformer structure, the analysis demonstrates that bond lengths and angles have little flexibility (i.e., no strong variability among conformers) and are accurately predicted by RDKit.

\paragraph{Matched conformer RMSD} We can more rigorously analyze the quality of a local structure $\hat{L}$ with respect to a given reference conformer $C$ by computing the minimum RMSD that can be obtained by combining $\hat{L}$ with optimal torsion angles. That is, we consider the RMSD distance of $C$ to the closest point on the manifold of possible conformers with local structure $\hat{L}$: $\rmsd_{\min}(C, \hat L) := \min_\tau \rmsd(C, \hat{C})$ where $\hat{C} = (\hat{L}, \tau)$. 

Conveniently, $\hat{C}$ is precisely the output of the differential evolution in Appendix \ref{app:matching}. Thus, the average RSMD reported in the last row of Table \ref{tab:matching_results} is the expected $\rmsd_{\min}$ of an optimal assignment of RDKit local structures to ground-truth conformers. This distance---0.324 \AA\ on GEOM-DRUGS---is significantly smaller than the error of the current state-of-the-art conformer generation methods. Further, it is only slightly larger than the average $\rmsd_{\min}$ of 0.284 \AA\ resulting from matching a ground truth conformer to the local structure of another randomly chosen ground truth conformer, which provides a measure of the variability among ground truth local structures. These observations support the claim that the accuracy of existing approaches on drug-like molecules can be significantly improved via better conditional sampling of torsion angles.

\subsection{Torsion updates}\label{app:discuss:torsion}

In the main text, we viewed updates $\Delta\tau$ as changes to a torsion angle $\tau$, and asserted that the same update applied to any torsion angle at a given bond (i.e., with any choice of reference neighbors) results in the same conformer. Given this, a potentially more intuitive presentation is to \emph{define} $\Delta\tau$ for a bond as a relative rotation around that bond, without reference to any torsion angle.

Consider a rotatable bond $(b, c)$ and the connected components $\mathcal{V}(b), \mathcal{V}(c)$ formed by removing the bond. Let $\mathbf{\hat r}_{bc} = (x_c-x_b)/|x_c-x_b|$ and similarly $\mathbf{\hat r}_{cb} = -\mathbf{\hat r}_{bc}$. Because the bond is freely rotatable, consider rotations of each side of the molecule around the bond axis given by $\mathbf{\hat r}_{bc}$. Specifically, let $\mathcal{V}(b)$ be rotated by some Euler vector $\boldsymbol{\theta}_b := \theta_b\mathbf{\hat r}_{bc}$ around $x_b$, and $\mathcal{V}(c)$ by $\boldsymbol{\theta}_c := \theta_c\mathbf{\hat r}_{bc}$ around $x_c$. Then the rotations induce a \emph{torsion update} $\Delta\tau$ if $\theta_c - \theta_b = \Delta\tau$; or equivalently
\begin{equation}
    \Delta\tau = (\boldsymbol{\theta}_c - \boldsymbol{\theta}_b) \cdot \mathbf{\hat r}_{bc} 
\end{equation}
The expression remains unchanged if we swap the indices $b, c$; thus there is no sign ambiguity. Some less formal but possibly more intuitive restatements of the sign convention are:
\begin{itemize}
    \item Looking down a bond, a \emph{positive} update is given by a CCW rotation of the nearer side; or a CW rotation of the further side
    \item For a viewer positioned in the middle of the bond, a \emph{positive} update is given by the CW rotation of any one side
    \item A \emph{positive} update is given by Euler vectors that point \emph{outwards} from the bond
\end{itemize}

These are illustrated in Figure~\ref{fig:torsion_intuition}. 

Since the Euler vector $\boldsymbol{\theta}$ is a pseudovector that remains unchanged under parity inversion, while $\mathbf{\hat r}$ is a normal vector, it is apparent that $\Delta\tau$---and any model predicting $\Delta\tau$---must be a pseudoscalar.

Because the update is determined by a {relative} rotation, it is not necessary to specify which side to rotate. That is, the same torsion update can be accomplished by rotating only one side, both sides in opposite directions, or both sides in the same direction. In practical implementation, we rotate the side of the molecule with fewer atoms, and keep the other side fixed.

\begin{figure}[h]
    \centering
    \includegraphics[width=0.6\textwidth]{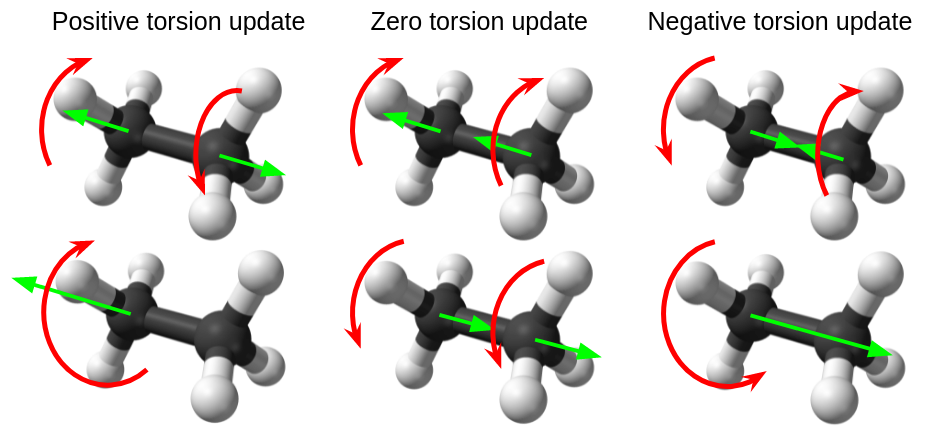}
    \caption{Torsion updates as relative rotations, with the rotations shown with curved red arrows and Euler vectors shown with straight green arrows. The second row emphasizes that the sign convention and update depend only on the \emph{relative} motion of the two sides.}
    \label{fig:torsion_intuition}
\end{figure}

\subsection{Chemical isomerism}\label{app:discuss:isomerism}

We have defined a molecule in terms of its bond connectivity, i.e., as a graph $G=(\mathcal{V}, \mathcal{E})$ with atoms $\mathcal{V}$ and $\mathcal{E}$. In chemistry, however, it is universal to consider molecules with the same connectivity, but whose conformers cannot interconvert, as {different molecules} called \emph{stereoisomers}. In our formalism, stereoisomers correspond to subsets of the space of conformers $\mathcal{C}_G$ for some molecular graph $G$. Many types of stereoisomerism exist, but the two most important are:
\begin{itemize}
    \item \textbf{Chirality}. Conformers with distinct values of chirality tags $\{z_i\}$---one for each chiral atom meeting certain criteria---are considered different molecules.
    \item \textbf{E/Z isomerism}, also called cis/trans isomerism. For each double bond meeting certain criteria, the space $[0, 2\pi)$ is partitioned into two halves, such that conformers are considered different molecules depending on the value of the torsion angle.
\end{itemize}
These are not meant to be formal definitions, and we refer to standard chemistry texts for a more detailed treatment. For our purposes, the key implication is that conformer generation requires generating conformers consistent with a \emph{given stereoisomer}. For a molecular graph $G$ with $k$ relevant chiral centers and $l$ relevant double bonds, there are $2^{k+l}$ possible stereoisomers, corresponding to the partition of $\mathcal{C}_G$ into $2^{k+l}$ disjoint subsets---only one of which corresponds to the molecule under consideration.

Torsional diffusion automatically handles chirality. Because we have considered chirality to be part of the local structure, it is drawn from the cheminformatics package RDKit, which is given the full identification of the stereoisomer along with the molecular graph, and is not modified by the torsional diffusion. Hence, our method always generates conformers with the correct chirality at each chiral center. On the other hand, GeoDiff does not consider chirality at all, while GeoMol generates molecules without any chirality constraints, and merely inverts the chiral centers that were generated incorrectly.

E/Z isomerism is significantly trickier, as it places a constraint on the torsion angles at double bonds, which are considered freely rotatable in our framework. Presently, torsional diffusion does not attempt to capture E/Z isomerism. One possible way of doing so is to augment the molecular graph with edges of a special type, and we leave such augmentation to future work. GeoDiff and GeoMol also do not attempt to treat E/Z isomerism.

More generally, while the abstract view of molecules as graphs has enabled rapid advances in molecular machine learning, stereoisomerism shows that it is clearly a simplification. As stereoisomers can have significantly different chemical properties and bioactivities, a more complete view of molecular space will be essential for further advances in molecular machine learning.

\subsection{Limitations of torsional diffusion}\label{app:discuss:limit}

As demonstrated in Section \ref{sec:experiments}, torsional diffusion significantly improves the accuracy and reduces the denoising runtime for conformer generation. However, torsional diffusion also has a number of limitations that we will discuss in this section.

\paragraph{Conformer generation} The first clear limitation is that the error that torsional diffusion can achieve is lower bounded by the \textit{quality of the local structure} from the selected cheminformatics method. As discussed in Appendix~\ref{app:discuss:rdkit}, this corresponds to the mean RMSD obtained after conformer matching, which is 0.324 \AA\ with RDKit local structures on DRUGS. Moreover, due to the the local structure \textit{distributional shift} discussed in Section \ref{sec:conf_matching}, conformer matching (or another method bridging the shift) is required to generate the training set. However, the resulting conformers are not the minima of the (unconditional or even conditional) potential energy function. Thus, the learning task becomes less physically interpretable and potentially more difficult; empirically we observe this clearly in the training and validation score-matching losses. We leave to future work the exploration of \textit{relaxations} of the rigid local structures assumption in a way that would still leverage the predominance of torsional flexibility in molecular structures, while at the same time allowing some flexibility in the independent components.

\newpage
\paragraph{Rings} The largest source of flexibility in molecular conformations that is not directly accounted for by torsional diffusion is the variability in \textit{ring conformations}. Since the torsion angles at bonds inside cycles cannot be independently varied, our framework treats them as part of the local structure. Therefore, torsional diffusion relies on the local structure sampler $p_G(L)$ to accurately model cycle conformations. Although this is true for a large number of relatively small rings (especially aromatic ones) present in many drug-like molecules, it is less true for puckered rings, fused rings, and larger cycles. In particular, torsional diffusion does not address the longstanding difficulty that existing cheminformatics methods have with macrocycles---rings with 12 or more atoms that have found several applications in drug discovery \citep{driggers2008exploration}. We hope, however, that the idea of restricting diffusion processes to the main sources of flexibility will motivate future work to define diffusion processes over cycles conformations combined with free torsion angles.

\paragraph{Boltzmann generation} With Boltzmann generators we are typically interested in sampling the Boltzmann distribution over the entire (Euclidean) conformational space $p_G(C)$. However, the procedure detailed in Section \ref{sec:energy} generates (importance-weighted) samples from the Boltzmann distribution \textit{conditioned} on a given local structure $p_G(C \mid L)$. To importance sample from the full Boltzmann distribution $p_G(C)$, one would need a model $p_G(L)$ over local structures that also provides exact likelihoods. This is not the case with RDKit or, to the best of our knowledge, other existing models, and therefore an interesting avenue for future work.

\paragraph{Proteins} As protein conformations are often described with backbone dihedral (i.e., torsion angles), it is natural to consider whether torsional diffusion may be useful for modeling protein flexibility. However, we do not believe that the \emph{direct} application of the framework to proteins or other macromolecules is very promising. Small changes in torsional coordinates cause large displacements in distant regions of the molecule, so the influence on a torsional score is not limited to the local neighborhood of the bond. For small molecules—even the ones in GEOM-XL—this is not a problem because of their limited spatial and graph theoretic diameters. In proteins, however, the graph diameter is 3 times the sequence length and can easily reach over 1000; and interactions between distant residues are extremely important in determining the structure and constraining flexibility. Although torsional diffusion may not be the right framework for modeling proteins, we believe that similar ideas (i.e., well-chosen diffusions over the flexible degrees of freedom) could be useful for generative models of protein structure and is a promising avenue of work.

\section{Experimental details} \label{app:exp_details}

\subsection{Dataset details}

\paragraph{Splits} We follow the data processing and splits from \citet{ganea2021geomol}. The splits are random with train/validation/test of 243473/30433/1000 for GEOM-DRUGS and 106586/13323/1000 for GEOM-QM9. GEOM-XL consists of only a test split (since we do not train on it), which consists of all 102 molecules in the MoleculeNet dataset with at least 100 atoms. For all splits, the molecules whose CREST conformers all have a canonical SMILES different from the SMILES of the molecule (meaning a reacted conformer), or that cannot be handled by RDKit, are filtered out. 

\paragraph{Dataset statistics} As can be seen in Figure \ref{fig:data_stats}, the datasets differ significantly in molecule size as measured by number of atoms or rotatable bonds. Particularly significant is the domain shift between DRUGS and XL, which we leverage in our experiments by testing how well models trained on DRUGS generalize to XL.

\begin{figure}[h!]
    \centering
    \includegraphics[width=0.49\textwidth]{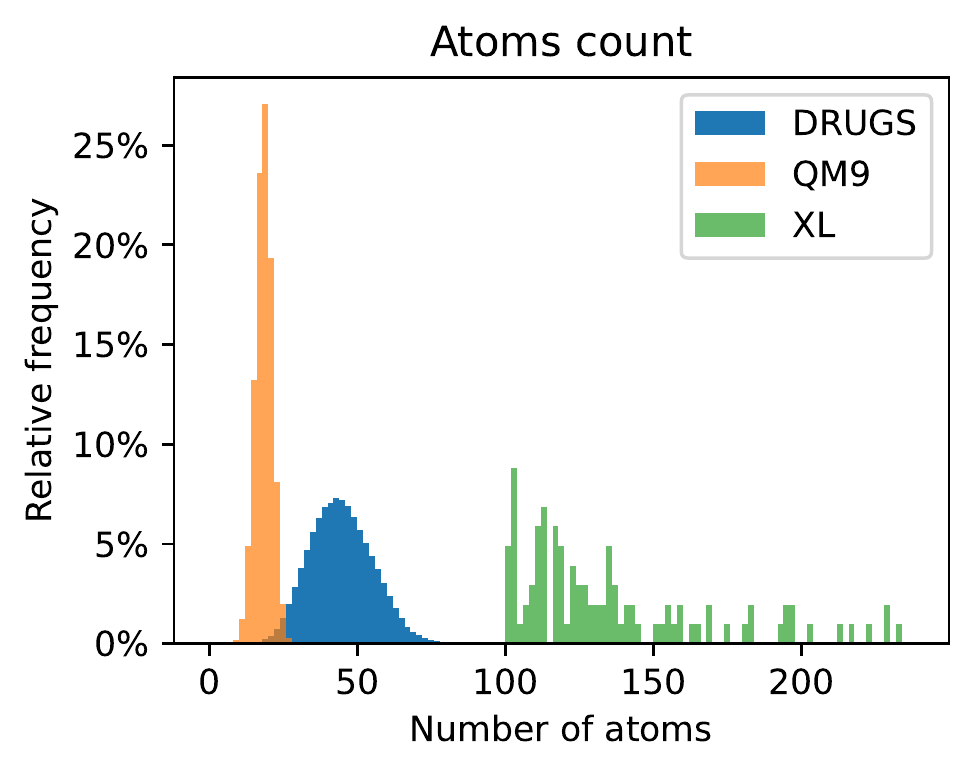}
    \includegraphics[width=0.49\textwidth]{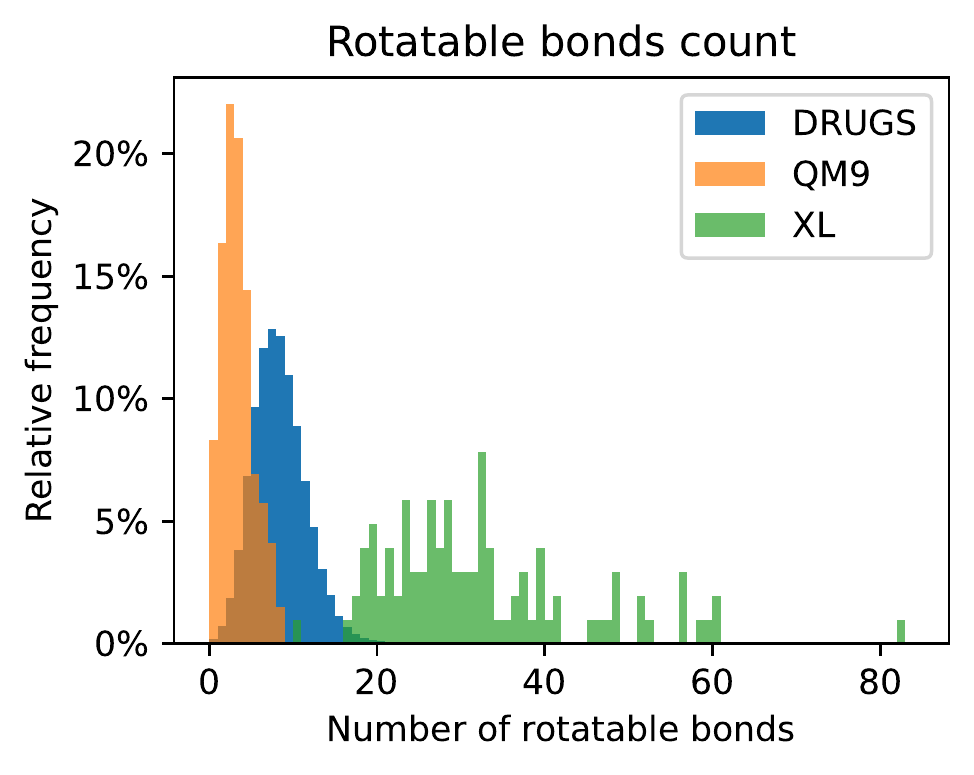}
    \caption{Statistics about the atoms and rotatable bonds counts in the three different datasets.}
    \label{fig:data_stats}
\end{figure}

\paragraph{Boltzmann generator} The torsional Boltzmann generator described in Section \ref{sec:boltzmann} is trained and tested on molecules from GEOM-DRUGS with 3--7 rotatable bonds. The training (validation) set consists of 10000 (400) such randomly selected molecules from the DRUGS training (validation) set. The test set consists of all the 453 molecules present in the DRUGS test set with 3--7 rotatable bonds.

\newpage
\subsection{Training and tuning details}

\paragraph{Conformer generation} For conformer ensemble generation on GEOM-DRUGS, the torsional diffusion models were trained on NVIDIA RTX A6000 GPUs for 250 epochs with the Adam optimizer (taking from 4 to 11 days on a single GPU). The hyperparameters tuned on the validation set were (in bold the value that was chosen): initial learning rate (0.0003, \textbf{0.001}, 0.003), learning rate scheduler patience (5, \textbf{20}), number of layers (2, \textbf{4}, 6), maximum representation order (1st, \textbf{2nd}), $r_{\max}$ (\textbf{5\AA}, 7\AA, 10\AA) and batch norm (\textbf{True}, False). All the other default hyperparameters used can be found in the attached code. For GEOM-XL the same trained model was used; for GEOM-QM9 a new model with the same hyperparameters was trained.

\paragraph{Torsional Boltzmann generators} We start from a torsional diffusion model pretrained on GEOM-DRUGS, and train for 250 epochs (6-9 days on a single GPU). A separate model is trained for every temperature. The resampling procedure with 5 steps is run for every molecule every $\max(5, ESS)$ epochs, where $ESS$ is computed for the current set of 32 samples. The only hyperparameter tuned (at temperature 300K) is $\sigma_{\min}$, the noise level at which to stop the reverse diffusion process.

We further improve the training procedure of torsional Boltzmann generators by implementing \textit{annealed training}. The Boltzmann generator for some temperature $T$ is trained at epoch $k$ by using the Boltzmann distribution at temperature $T' = T + (3000 - T)/k$ as the target distribution for that epoch. Intuitively, this trains the model at the start with a smoother distribution that is easier to learn, which gradually transforms into the desired distribution.

\subsection{Evaluation details}

\paragraph{Ensemble RMSD} As evaluation metrics for conformer generation, \cite{ganea2021geomol} and following works have used the so-called Average Minimum RMSD (AMR) and Coverage (COV) for Precision (P) and Recall (R) measured when generating twice as many conformers as provided by CREST. For $K=2L$ let $\{C^*_l\}_{l \in [1, L]}$ and $\{C_k\}_{k \in [1, K]}$ be respectively the sets of ground truth and generated conformers:
\begin{equation}
\begin{aligned}
\text{COV-R} &:= \frac{1}{L}\, \bigg\lvert\{ l \in [1..L]: \exists k \in [1..K], \rmsd(C_k, C^*_l) < \delta \, \bigg\rvert\\
\text{AMR-R} &:= \frac{1}{L} \sum_{l \in [1..L]}  \min_{ k\in [1..K]} \rmsd(C_k, C^*_l)
\end{aligned}
\end{equation}

where $\delta$ is the coverage threshold. The precision metrics are obtained by swapping ground truth and generated conformers.

In the XL dataset, due to the size of the molecules, we compute the RMSDs without testing all possible symmetries of the molecules, therefore the obtained RMSDs are an upper bound, which we find to be very close in practice to the permutation-aware RSMDs.

\paragraph{Runtime evaluation} We benchmark the methods on CPU (Intel i9-9920X) to enable comparison with RDKit. The number of threads for RDKit, \verb|numpy|, and \verb|torch| is set to 8. We select 10 molecules at random from the GEOM-DRUGS test set and generate 8 conformers per molecule using each method. Script loading and model loading times are not included in the reported values.

\paragraph{Boltzmann generator} To evaluate how well the torsional Boltzmann generator and the AIS baselines sample from the conditional Boltzmann distribution, we report their median effective sample size (ESS) \citep{kish1965survey} given the importance sampling weights $w_i$ of 32 samples for each molecule:
\begin{equation}
ESS = \frac{\big( \sum_{i=1}^{32} w_i \big)^2}{\sum_{i=1}^{32} w_i^2}
\end{equation}
This approximates the number of independent samples that would be needed from the target Boltzmann distribution to obtain an estimate with the same variance as the one obtained with the importance-weighted samples.

For the baseline annealed importance samplers, the transition kernel is a single Metropolis-Hastings step with the wrapped normal distributions on $\mathbb{T}^m$ as the proposal. We run with a range of kernel variances: $0.25, 0.5, 0.3, 0.5, 0.75, 1, 1.5. 2$; and report the best result. We use an exponential annealing schedule; i.e., $p_n \propto p_0^{1-n/N}p_N^{n/N}$ where $p_0$ is the uniform distribution and $p_N$ is the target Boltzmann density.

\section{Additional results} \label{app:results}

\paragraph{Performance vs size} Figure \ref{fig:perf_rotatable_bonds} shows the performance of different models as a function of the number of rotatable bonds. Molecules with more rotatable bonds are more flexible and are generally larger; it is therefore expected that the RMSD error will increase with the number of bonds. With very few rotatable bonds, the error of torsional diffusion depends mostly on the quality of the local structures it was given, and therefore it has a similar error as RDKit. However, as the number of torsion angles increases, torsional diffusion deteriorates more slowly than other methods. 

The trend continues with the very large molecules in GEOM-XL (average 136 atoms and 32 rotatable bonds). These not only are larger and more flexible, but---for machine learning models trained on GEOM-DRUGS---are also out of distribution. As shown in Table \ref{tab:results_xl}, on GEOM-XL GeoMol only performs marginally better than RDKit, while torsional diffusion reduces RDKit AMR by 30\% on recall and 12\% on precision. These results can very likely be improved by training and tuning the torsional diffusion model on larger molecules.

\begin{figure}[h!]
    \centering
    \includegraphics[width=0.49\textwidth]{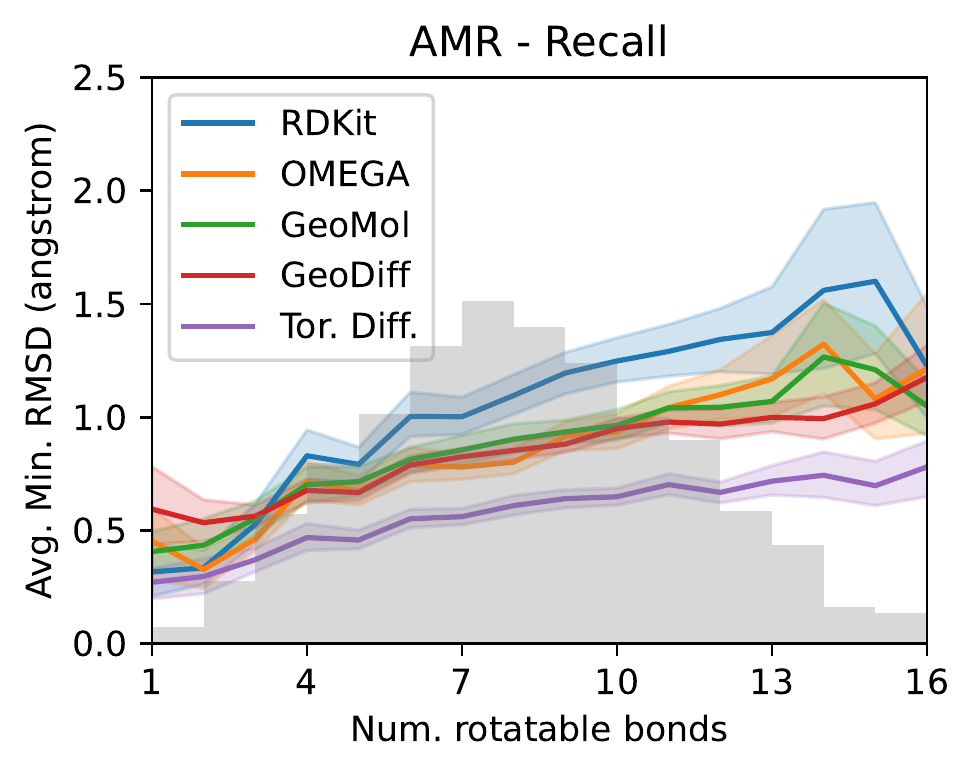}
    \includegraphics[width=0.49\textwidth]{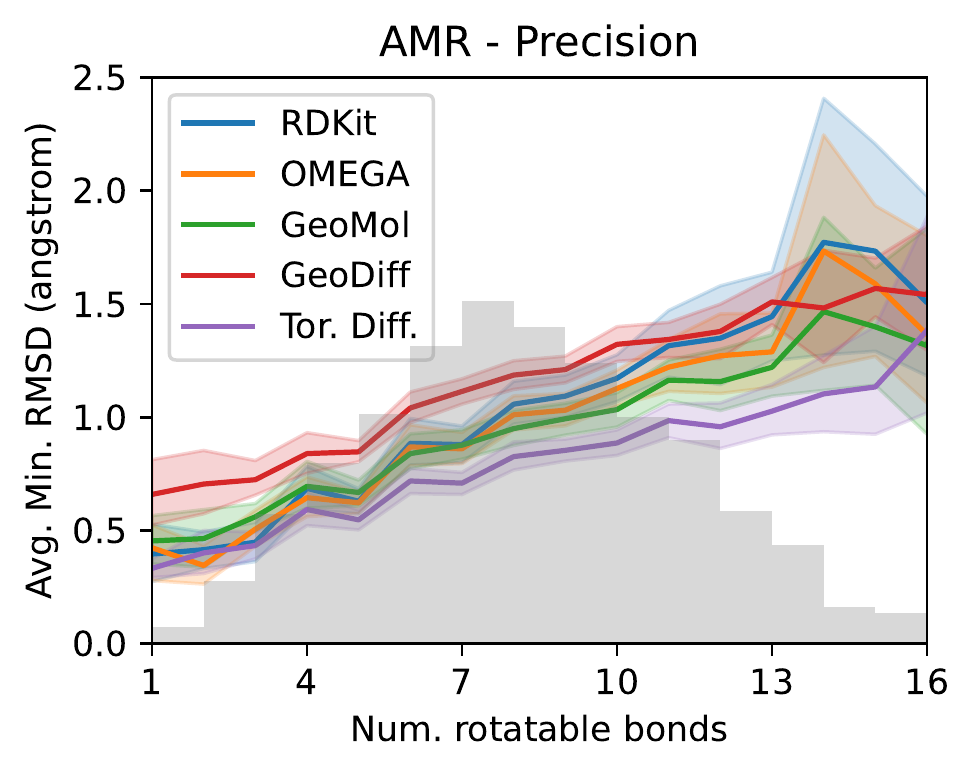}
    \caption{Average minimum RMSD (AMR) for recall (\emph{left}) and precision (\emph{right}) of the different conformer generation methods for molecules with different number of rotatable bonds in GEOM-DRUGS. The background shows the distribution of the number of rotatable bonds.}
    \label{fig:perf_rotatable_bonds}
\end{figure}

\begin{table}[ht]
\caption{Performance of various methods on the GEOM-XL dataset. } \label{tab:results_xl}
\centering
\begin{tabular}{lcccc} \toprule
                  & \multicolumn{2}{c}{AMR-R $\downarrow$} & \multicolumn{2}{c}{AMR-P $\downarrow$} \\
Model  & Mean      & Med         & Mean        & Med         \\ \midrule
RDKit               & 2.92          & 2.62          & 3.35          & 3.15          \\
GeoMol              & 2.47          & 2.39          & 3.30          & 3.15          \\
Torsional Diffusion & \textbf{2.05} & \textbf{1.86} & \textbf{2.94} & \textbf{2.78}             \\ \bottomrule
\end{tabular}
\end{table}

\newpage
\paragraph{Small molecules} We also train and evaluate our model on the small molecules from GEOM-QM9 and report the performance in Table \ref{tab:results_qm9}. For these smaller molecules, cheminformatics methods already do very well and, given the very little flexibility and few rotatable bonds present, the accuracy of local structure significantly impacts the performance of torsional diffusion. RDKit achieves a mean recall AMR just over 0.23\AA, while torsional diffusion based on RDKit local structures results in a mean recall AMR of 0.178\AA. This is already very close lower bound of 0.17\AA\ that can be achieved with RDKit local structures (as approximately calculated by conformer matching). Torsional diffusion does significantly better than other ML methods, but is only on par with or slightly worse than OMEGA, which, evidently, has a better local structures for these small molecules.

\begin{table}[t!]
\caption{Performance of various methods on the GEOM-QM9 dataset test-set ($\delta=0.5$\AA). Again GeoDiff was retrained on the splits from \cite{ganea2021geomol}. } \label{tab:results_qm9}
\begin{tabular}{l|cccc|cccc} \toprule
                & \multicolumn{4}{c|}{Recall} & \multicolumn{4}{c}{Precision}  \\
                  & \multicolumn{2}{c}{Coverage $\uparrow$} & \multicolumn{2}{c|}{AMR $\downarrow$} & \multicolumn{2}{c}{Coverage $\uparrow$} & \multicolumn{2}{c}{AMR $\downarrow$} \\
Method & Mean & Med & Mean & Med & Mean & Med & Mean & Med \\ \midrule
RDKit            & 85.1          & \textbf{100.0} & 0.235          & 0.199          & 86.8          & \textbf{100.0} & 0.232          & 0.205          \\
OMEGA            & 85.5          & \textbf{100.0} & \textbf{0.177} & \textbf{0.126} & 82.9          & \textbf{100.0} & 0.224          & \textbf{0.186} \\
GeoMol           & 91.5          & \textbf{100.0} & 0.225          & 0.193          & 86.7          & \textbf{100.0} & 0.270          & 0.241          \\
GeoDiff          & 76.5          & \textbf{100.0} & 0.297          & 0.229          & 50.0          & 33.5           & 0.524          & 0.510          \\ \midrule
Torsional diffusion             & \textbf{92.8} & \textbf{100.0} & 0.178          & 0.147          & \textbf{92.7} & \textbf{100.0} & \textbf{0.221} & 0.195             \\ \bottomrule
\end{tabular}
\end{table}

\paragraph{Ablation experiments}
In Table \ref{tab:ablations} we present a set of ablation studies to evaluate the importance of different components of the proposed torsional diffusion method:
\begin{enumerate}
    \item \textit{Baseline} refers to the model described and tested throughout the paper.
    \item \textit{Probability flow ODE} refers to using the ODE formulation of the reverse diffusion process (not an ablation, strictly speaking). As expected, it obtains similar results to the baseline SDE formulation.
    \item \textit{Only D.E. matching} refers to a model trained on conformers obtained by a random assignment of RDKit local structures to ground truth conformers (without first doing an optimal assignment as in Appendix \ref{app:matching}); this performs only marginally worse than full conformer matching.
    \item \textit{First order irreps} refers to the same model but with node irreducible representations kept only until order $\ell = 1$ instead of $\ell = 2$; this worsens the average error by about 5\%, but results in a 41\% runtime speed-up. 
    \item \textit{Train on ground truth L} refers to a model trained directly on the ground truth conformers without {conformer matching} but tested (as always) on RDKit local structures; although the training and validation score matching loss of this model is significantly lower, its inference performance reflects the detrimental effect of the local structure distributional shift.
    \item \textit{No parity equivariance} refers to a model whose outputs are parity invariant instead of parity equivariant; the model cannot distinguish a molecule from its mirror image and fails to learn, resulting in performance on par with a random baseline.
    \item \textit{Random $\boldsymbol{\tau}$} refers to a random baseline using RDKit local structures and uniformly random torsion angles.
\end{enumerate}

\newpage

\begin{table}[p]
\caption{Ablation studies with ensemble RMSD on GEOM-DRUGS. Refer to the text in the Appendix for an explanation of each entry. As usual, we compute Coverage with $\delta = 0.75$ \AA.} \label{tab:ablations}
\centering
\begin{tabular}{l|cccc|cccc} \toprule
                & \multicolumn{4}{c|}{Recall} & \multicolumn{4}{c}{Precision}  \\
                  & \multicolumn{2}{c}{Coverage $\uparrow$} & \multicolumn{2}{c|}{AMR $\downarrow$} & \multicolumn{2}{c}{Coverage $\uparrow$} & \multicolumn{2}{c}{AMR $\downarrow$} \\
Method & Mean & Med & Mean & Med & Mean & Med & Mean & Med \\ \midrule
Baseline & {72.7} & {80.0} & {0.582} & {0.565} & {55.2} & \textbf{56.9} & \textbf{0.778} & \textbf{0.729}    \\ \midrule
Probability flow ODE & \textbf{73.1} & 80.4 & \textbf{0.577} & \textbf{0.557} & \textbf{55.3} & 55.7 & 0.779 & 0.737 \\
Only D.E. matching & 72.5 & \textbf{81.1} & 0.588 & 0.569 & 53.8 & 56.1 & 0.794 & 0.749 \\
First order irreps & 70.1 & 77.9 & 0.605 & 0.589 & 51.4 & 51.4 & 0.817 & 0.783\\
Train on ground truth $L$ & 34.8 & 22.4 & 0.920 & 0.909 & 22.3 & 7.8 & 1.182 & 1.136\\
No parity equivariance & 30.5 & 12.5 & 0.928 & 0.929 & 17.9 & 3.9 & 1.234 & 1.217\\
\midrule
Random $\boldsymbol{\tau}$ & 30.9 & 13.2 & 0.922 & 0.923 & 18.2 & 4.0 & 1.228 & 1.217\\
\bottomrule
\end{tabular}
\end{table}

\begin{table}[p]
\caption{Ensemble RMSD results on GEOM-DRUGS for varying number of diffusion steps. 20 steps were used for all results reported elsewhere. As usual, we compute Coverage with $\delta = 0.75$ \AA.} \label{tab:steps}
\centering
\begin{tabular}{l|cccc|cccc} \toprule
                & \multicolumn{4}{c|}{Recall} & \multicolumn{4}{c}{Precision}  \\
                  & \multicolumn{2}{c}{Coverage $\uparrow$} & \multicolumn{2}{c|}{AMR $\downarrow$} & \multicolumn{2}{c}{Coverage $\uparrow$} & \multicolumn{2}{c}{AMR $\downarrow$} \\
Steps & Mean & Med & Mean & Med & Mean & Med & Mean & Med \\ \midrule
3 & 42.9 &	33.8	& 0.820	& 0.821 &	24.1 & 	11.1	& 1.116& 	1.100\\
5 & 58.9 &	63.6	& 0.698 & 	0.685 &	35.8 &	26.6 &	0.979	& 0.963\\
10 & 70.6 &	78.8 &	0.600 &	0.580	& 50.2 & 	48.3 &	0.827 &	0.791 \\
20 & {72.7} & {80.0} & {0.582} & {0.565} & {55.2} & {56.9} & {0.778} & {0.729}    \\ 
50 & 73.1 & 80.4 & 0.578 & 0.557 & 57.6 & 60.7 & 0.753 & 0.699\\\bottomrule
\end{tabular}
\end{table}

\begin{table}[p]
    \caption{Median absolute error of generated v.s. ground truth ensemble properties with and without relaxation. $E, \Delta\epsilon, E_{\min}$ in kcal/mol, $\mu$ in debye.}\label{tab:properties_appendix}
    \centering
    \begin{tabular}{l|cccc|cccc}
    \toprule 
    & \multicolumn{4}{c|}{Without relaxation} &  \multicolumn{4}{c}{With relaxation} \\
    Method & $E$ & $\mu$ & $\Delta \epsilon$ & $E_{\min}$ & $E$ & $\mu$ & $\Delta \epsilon$ & $E_{\min}$ \\\midrule
    RDKit & 39.08 & 1.40 & 5.04 & 39.14 & 0.81 & 0.52 & 0.75 & 1.16 \\
    OMEGA & \textbf{16.47} & \textbf{0.78} & \textbf{3.25} & \textbf{16.45} & 0.68 & 0.66 & 0.68 & 0.69 \\
    GeoMol & 43.27 & 1.22 & 7.36 & 43.68 & 0.42 & \textbf{0.34} & 0.59 & 0.40 \\
    GeoDiff & 18.82 & 1.34 & 4.96 & 19.43 & 0.31 & {0.35} & 0.89 & 0.39 \\
    Tor. Diff. & 36.91 & 0.92 & 4.93 & 36.94 & \textbf{0.22} & {0.35} & \textbf{0.54} & \textbf{0.13} \\ \bottomrule
    \end{tabular}
\end{table}

\clearpage

\paragraph{Reverse diffusion steps} In Table~\ref{tab:steps} we vary the number of steps used in the reverse diffusion process and evaluate the ensemble RMSD results on GEOM-DRUGS. We find that torsional diffusion is remarkably parsimonious in terms of number of steps required: the majority of gain in performance over prior diffusion-based methods is attained with only 10 steps. We confirm that increasing the number of steps from the default of 20 to 50 only results in minor performance gains.

\paragraph{Ensemble properties}

In Table \ref{tab:properties_appendix}, we report the median absolute errors of the Boltzmann-weighted properties of the generated vs CREST ensembles, with and without GFN2-xTB relaxation. For all methods, the errors without relaxation are far too large for the computed properties to be chemically useful---for reference, the thermal energy at room temperature is 0.59 kcal/mol. In realistic settings, relaxation of local structures is necessary for any method, after which errors from global flexibility become important. After relaxation, torsional diffusion obtains property approximations on par or better than all previous methods.

\paragraph{Torsional Boltzmann generator}

Figure \ref{fig:hist_bg} shows the histograms of ESSs at 500K for the torsional Boltzmann generator and the AIS baseline. While AIS fails to generate more than one effective sample for most molecules (tall leftmost column), torsional Boltzmann generators are much more efficient, with more than five effective samples for a significant fraction of molecules.

\begin{figure}[h]
    \centering
    \includegraphics[width=0.49\textwidth]{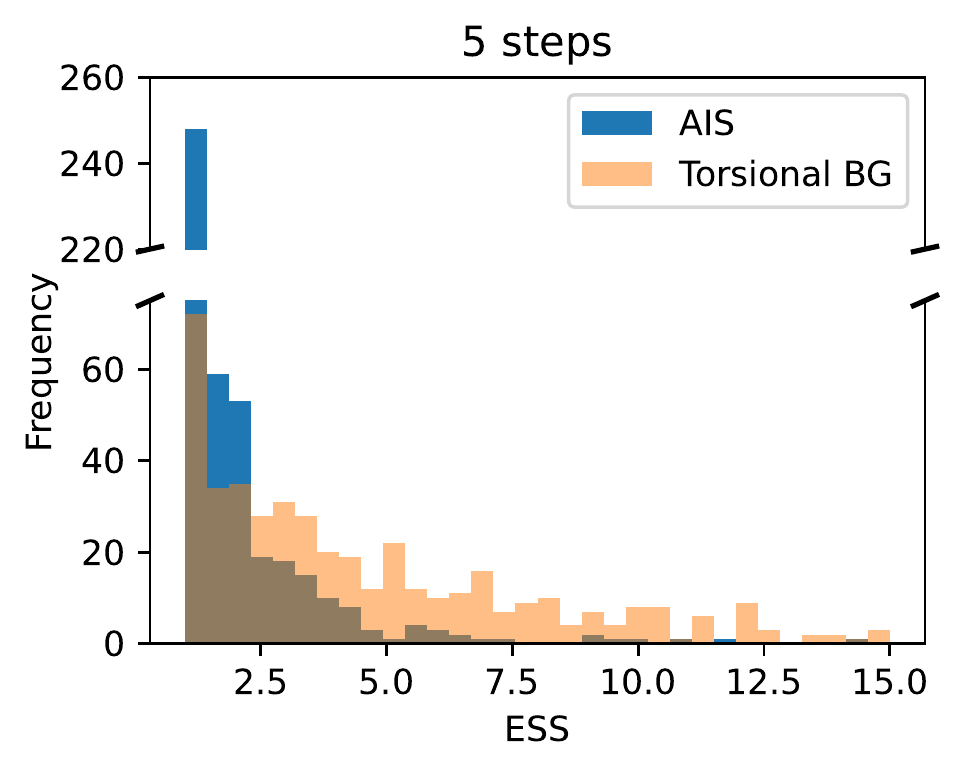}
    \includegraphics[width=0.49\textwidth]{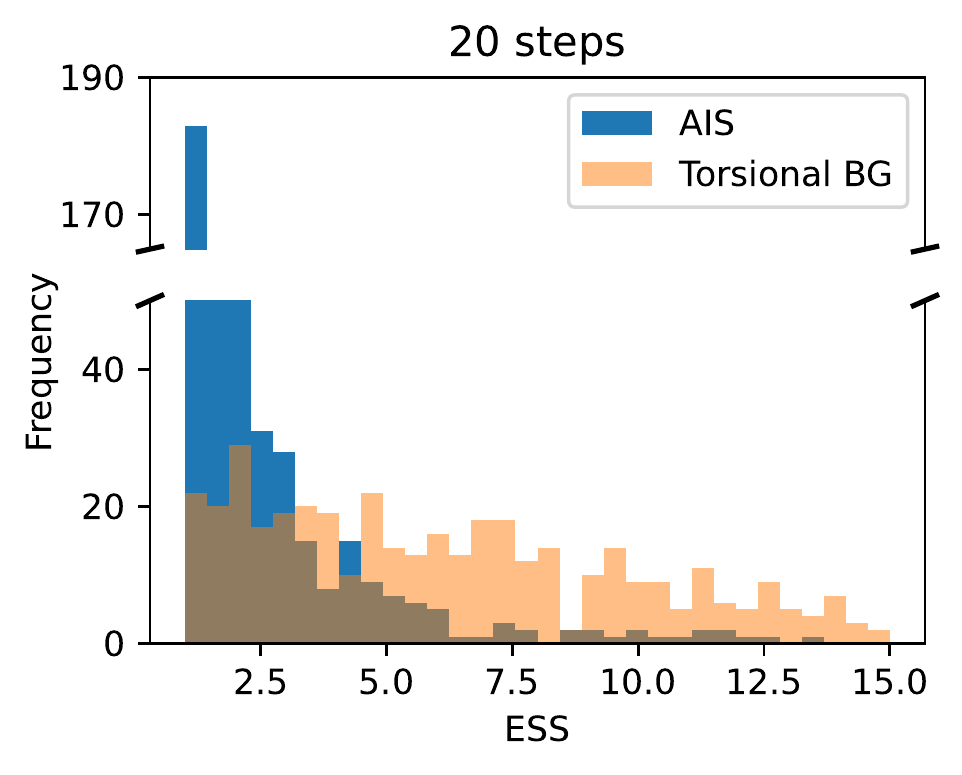}
    \caption{Histogram of the ESSs of the torsional Boltzmann generator and AIS baseline at 500K.}
    \label{fig:hist_bg}
\end{figure}

\end{document}